\g@addto@macro{\UrlBreaks}{\UrlOrds}
\let\c@theorem\relax
\let\c@lemma\relax
\let\c@corollary\relax
\let\c@definition\relax
\let\c@example\relax
\newcommand{\ie}{i.e.\@\xspace}
\newcommand{\eg}{e.g.\@\xspace}
\newcommand{\wrt}{w.\,r.\,t.\@\xspace}
\newcommand{\suchthat}{s.\,t.\@\xspace}
\newcommand{\wloss}{w.l.o.g.\@\xspace}
\newcommand{\Wloss}{W.l.o.g.\@\xspace}
      \newcommand{\complClFont}[1]{\mathbf{#1}}                    \newcommand{\logicClFont}[1]{\mathsf{#1}}                 \newcommand{\problemFont}[1]{\mathrm{#1}}         \newcommand{\mathCommandFont}[1]{\mathrm{#1}}     
\newcommand{\bigO}[1]{\protect\ensuremath{{\mathcal{O}\left(#1\right)}}}
\newcommand{\arity}[1]{{\protect\ensuremath{\mathCommandFont{ar}(#1)}}}
\newcommand{\rel}[1]{{\protect\ensuremath{\mathCommandFont{rel}(#1)}}}
\newcommand{\tuple}[1]{\vec{#1}}
\newcommand{\imp}{\rightarrow}
\newcommand{\Dom}[1]{{\protect\ensuremath{\mathsf{Dom}(#1)}}}
\newcommand{\Fr}[1]{{\protect\ensuremath{\mathsf{Fr}(#1)}}}
\newcommand{\degg}[1]{{\protect\ensuremath{\mathsf{deg}_\negg(#1)}}}
\newcommand{\Prop}[1]{{\protect\ensuremath{\mathsf{Prop}(#1)}}}
\newcommand {\U}{\mathsf{U}\,}
\newcommand{\negg}{{\sim}}
\providecommand{\dfn}{\mathrel{\mathop:}=}
\providecommand{\ddfn}{\mathrel{\mathop{{\mathop:}{\mathop:}}}=}
\newcommand{\fr}{\ensuremath{\mathrm{Fr}}}
\newcommand{\ADQBF}{\protect\ensuremath\problemFont{ADQBF}}
\newcommand{\DQBF}{\protect\ensuremath\problemFont{DQBF}}
\newcommand{\SAT}{\protect\ensuremath\problemFont{SAT}}
\newcommand{\TRUE}{\protect\ensuremath\problemFont{TRUE}}
\newcommand{\VAL}{\protect\ensuremath\problemFont{VAL}}
\newcommand{\leqpm}{\protect\ensuremath{\leq_\mathCommandFont{m}^\mathCommandFont{P}}}
\newcommand{\leqlogm}{\protect\ensuremath{\leq^\mathCommandFont{log}_\mathCommandFont{m}}}
\newcommand{\AEXPPOLY}{\protect\ensuremath{\complClFont{AEXPTIME}(\mathrm{poly})}\xspace}
\newcommand{\PSPACE}{\protect\ensuremath{\complClFont{PSPACE}}\xspace}
\newcommand{\NEXPTIME}{\protect\ensuremath{\complClFont{NEXPTIME}}\xspace}
\newcommand{\EXPTIME}{\protect\ensuremath{\complClFont{EXPTIME}}\xspace}
\newcommand{\SigmaE}[1]{{\ensuremath\protect\Sigma^\mathCommandFont{E}_{#1}}}
\newcommand{\PiE}[1]{{\ensuremath\protect\Pi^\mathCommandFont{E}_{#1}}}
\newcommand{\calG}{\protect\ensuremath{\mathcal{G}}}
\newcommand{\ESO}{\logicClFont{ESO}}
\newcommand{\SO}{\logicClFont{SO}}
\newcommand{\TL}{\logicClFont{TL}}
\newcommand{\PL}{\logicClFont{PL}}
\newcommand{\PD}{\logicClFont{PD}}
\newcommand{\PLInc}{\logicClFont{PLInc}}
\newcommand{\ML}{\logicClFont{ML}}
\newcommand{\MInc}{\logicClFont{MInc}}
\newcommand{\QPL}{\logicClFont{QPL}}
\newcommand{\PTL}{\logicClFont{PTL}}
\newcommand{\QPTL}{\logicClFont{QPTL}}
\newcommand{\QPD}{\logicClFont{QPD}}
\newcommand{\QPLInc}{\logicClFont{QPLInc}}
\newcommand{\LL}{\logicClFont{L}}
\newcommand{\modelsPL}{\models_{\PL}}
\newcommand{\dep}[1]{\mathrm{dep}\!\left(#1\right)}
\newcommand{\depop}{\mathrm{dep}}
\newcommand{\IF}{\logicClFont{IF}}
\newcommand{\SOQPL}{\logicClFont{SO}_2}
\newcommand{\ESOQPL}{\logicClFont{ESO}_2}
\newcommand{\ESOQPLuniq}{\logicClFont{ESO}_2^\textrm{u}}
\newcommand{\SigmaQPL}[1]{\Sigma_{#1}\text{-}\logicClFont{SO}_2}
\newcommand{\PiQPL}[1]{\Pi_{#1}\text{-}\logicClFont{SO}_2}
\newcommand{\SOQPLuniq}{\logicClFont{SO}_2^\textrm{u}}
\newcommand{\SigmaQPLuniq}[1]{\Sigma_{#1}\text{-}\logicClFont{SO}_2^\textrm{u}}
\newcommand{\PiQPLuniq}[1]{\Pi_{#1}\text{-}\logicClFont{SO}_2^\textrm{u}}
 \newcommand{\branch}[1]{\protect\ensuremath{\mathsf{branch}(#1)}}
 \newcommand{\store}[1]{\protect\ensuremath{\mathsf{store}(#1)}}
 \newcommand{\tree}[1]{\protect\ensuremath{\mathsf{tree}(#1)}}
\renewcommand{\vec}{\overline}
\newtheoremstyle{theorem}{\bigskipamount}{\medskipamount}{}{}{\bfseries}{.}{0.5em}{}
\newtheoremstyle{example}{\bigskipamount}{\medskipamount}{}{}{\bfseries}{:}{0.5em}{}
\theoremstyle{plain}
\newtheorem{theorem}{Theorem}[section]
\newtheorem*{theorem*}{Theorem}
\newaliascnt{lemma}{theorem}
\newaliascnt{corollary}{theorem}
\newaliascnt{definition}{theorem}
\newaliascnt{example}{theorem}
\newtheorem{lemma}[lemma]{Lemma}
\newtheorem{corollary}[corollary]{Corollary}
\newtheorem{proposition}[theorem]{Proposition}
\theoremstyle{definition}
\newtheorem{definition}[definition]{Definition}
\theoremstyle{example}
\newtheorem{example}[example]{Example}
\crefname{theorem}{theorem}{theorems}
\crefname{lemma}{lemma}{lemmas}
\crefname{example}{example}{example}
\crefname{corollary}{corollary}{corollaries}
\crefname{definition}{definition}{definitions}
\crefname{example}{example}{examples}
\title{On Quantified Propositional Logics and the Exponential Time Hierarchy}
\author{Miika Hannula
\institute{Department of Computer Science \\ The University of Auckland}
\email{m.hannula@auckland.ac.nz}
\and
Juha Kontinen
\institute{Department of Mathematics and Statistics \\ University of Helsinki}
\email{juha.kontinen@helsinki.fi}
\and Martin Lück
\institute{Institut für Theoretische Informatik \\ Leibniz Universität Hannover}
\email{lueck@thi.uni-hannover.de}
\and
 Jonni Virtema
\institute{Department of Mathematics and Statistics \\ University of Helsinki}
\institute{Institut für Theoretische Informatik \\ Leibniz Universität Hannover}
\email{jonni.virtema@helsinki.fi}}
\begin{document}
\maketitle

\begin{abstract}
We study quantified propositional logics from the complexity theoretic point of view. First we introduce
alternating dependency quantified boolean formulae ($\ADQBF$) which generalize both quantified and dependency quantified boolean formulae.
We show that the truth evaluation for
$\ADQBF$ is \AEXPPOLY-complete. We also identify fragments for which the problem is complete for the levels of the exponential hierarchy. Second we study propositional team-based logics. We show that $\DQBF$ formulae correspond naturally to quantified propositional dependence logic and present a general $\NEXPTIME$ upper bound for quantified propositional logic with a large class of generalized dependence atoms. Moreover we show \AEXPPOLY-completeness for extensions of propositional team logic with generalized dependence atoms.
\end{abstract}

\section{Introduction}

Deciding whether a given quantified propositional formula (qBf) is valid is a canonical $\PSPACE$-complete problem \cite{Stockmeyer:1973}.
\emph{Dependency quantified propositional formulae} (dqBf) introduced by Peterson et~al.~\cite{Peterson2001} are variants of qBfs for which the corresponding decision problem is $\NEXPTIME$-complete. Intuitively the rise of complexity stems from the fact that existential second-order quantification (existential quantification of Boolean functions) can be expressed in dqBf.

We present several logical formalisms, based on quantified propositional logic, that capture the concept of function quantification. We start from a variant of qBf where quantification happens on the level of Boolean functions in form of explicit syntactical objects. This \emph{second-order qBf}, introduced in \Cref{sec:soqbf}, captures the exponential hierarchy~\cite{lohrey_model-checking_2012,sqbf_report}.
In \Cref{sec:dqbf} we extend dqBf to incorporate universal quantification of Skolem functions, and show that second-order qBf can be translated to this novel formalism in logspace.
In Sections \ref{sec:team} and \ref{sec:atoms} we finally study \emph{dependence logic} and \emph{team logic} \cite{vaananen07} in the framework of qBf. We give efficient translations between these different formalisms and prove that they all capture the exponential hierarchy.

For a detailed exposition on dependence logics see the recent survey \cite{dukovo16}.
For the definition of the relevant complexity classes we follow the definition of \emph{alternating Turing machines} by Chandra, Kozen and Stockmeyer \cite{alternation}.
The class $\AEXPPOLY$ is the class of all problems decidable by alternating Turing machines in exponential time, \ie, $\bigO{2^{n^\bigO{1}}}$, and polynomially many alternations between existential and universal states. The classes $\SigmaE{k}$ and $\PiE{k}$ of the exponential hierarchy are similar but with at most $k$ alternations, where $k \in \mathbb{N}$, starting in an existential resp.\ universal state. The classes are closed both under $\leqpm$- and $\leqlogm$\emph{-reductions}. In this paper, if not specified otherwise, when we speak of \emph{reductions} we mean $\leqlogm$\emph{-reductions}.

\section{Second-order propositional logic}\label{sec:soqbf}

\emph{Second-order propositional logic} is obtained from usual qBf by shifting from quantification over proposition variables to quantification over Boolean functions. In this setting Boolean functions with arity $0$ correspond to propositional variables. Boolean functions with arity $\geq 1$ are called \emph{proper functions}.

For a formal definition let $\Phi$ be a set of function symbols. Every function symbol $f \in \Phi$ has its own well-defined arity $\arity{f}$. The syntax of $\SOQPL(\Phi)$ is given as follows:

\[
\phi \ddfn (\phi \wedge \phi) \mid \neg \phi \mid \exists f \phi \mid f(\underbrace{\phi, \ldots, \phi}_{n \text{ times}})\text{, where }f \in \Phi\text{ and }\arity{f} = n.
\]

The symbols $\forall$, $\lor$, $\imp$ and $\leftrightarrow$ are defined as the usual abbreviations.
We call this logic $\SOQPL$ as it essentially corresponds to second-order predicate logic $\SO$ restricted to the domain $\{0,1\}$. $\ESOQPL$ is the fragment of $\SOQPL$ where quantifiers $\exists$ for proper functions may occur only in the scope of even number of negations,  i.e., universal quantification of proper functions is disallowed.

\begin{definition}[$\SOQPL$ semantics]
The semantics of $\SOQPL(\Phi)$ is defined with assignments that map variables to Boolean functions: A $\Phi$-\emph{interpretation} $S$ is a map from $\Phi$ to Boolean functions, \ie, for any $f \in \Phi$ with arity $\arity{f} = n$, $S(f)\, \colon \{0,1\}^n \to \{0,1\}$ is an $n$-ary Boolean function.
Given an $\SOQPL(\Phi)$-formula $\phi$, write $\llbracket \phi \rrbracket_S$ for its valuation in $S$, which is defined as
\begin{alignat*}{2}
&\llbracket \phi \land \psi \rrbracket_S &&\dfn \llbracket \phi \rrbracket_S \cdot \llbracket \psi \rrbracket_S,\\
&\llbracket \neg \phi \rrbracket_S &&\dfn 1 - \llbracket \phi \rrbracket_S,\\
&\llbracket f(\phi_1,\ldots,\phi_n) \rrbracket_S &&\dfn S(f)(\llbracket \phi_1 \rrbracket_S, \ldots, \llbracket \phi_n \rrbracket_S)\text{, where }\arity{f} = n,\\
&\llbracket \exists f \phi\rrbracket_S &&\dfn \max\Set{\llbracket\phi\rrbracket_{S^f_F} | F \,\colon \{0,1\}^n \to \{0,1\}},
\end{alignat*}
where $S^f_F$ is the $\Phi$-interpretation \suchthat{} $S^f_F(f) = F$ and $S^f_F(g) = S(g)$ for $g \neq f$.
An $\SOQPL(\Phi)$-formula $\phi$ is \emph{valid} if $\llbracket \phi \rrbracket_S = 1$ for all $\Phi$-interpretations $S$. It is \emph{satisfiable} if there is at least one $S$ \suchthat $\llbracket \phi \rrbracket_S = 1.$ Finally, $\phi$ is \emph{true} if it contains no free variables and it is valid. If $\vec f = (f_1,\dots, f_n)$ is a tuple of function symbols, we sometimes write $\forall\vec f$ for $\forall f_1\dots \forall f_n$ and $\exists\vec f$ for $\exists f_1\dots \exists f_n$.

\end{definition}

In the following we drop $\Phi$ and just assume that it contains sufficiently many function symbols of any finite arity.

\begin{definition}
A second-order formula is \emph{simple} if functions have only propositions as arguments. It is in \emph{prenex form} if all quantifiers are at the beginning of the formula, and all proper functions are quantified before propositions, \ie, it is of the form
\[
\phi = \Game_1 f_1 \dots \Game_n f_n \Game'_1 x_1 \dots \Game'_m x_m \psi\text{,}
\]
where $n,m \geq 0$, $\{\Game_1, \ldots, \Game_n, \Game'_1, \ldots, \Game'_m\} \subseteq \{\exists, \forall\}$, the functions $f_1,\ldots,f_n$ have arity $>0$, the functions $x_1,\ldots,x_m$ have arity $0$, and $\psi$ is a quantifier-free propositional formula.

Write $\SigmaQPL{k}$ for the restriction of $\SOQPL$ to formulae of the form $\phi = \Game_1 \vec f_1 \; \ldots\; \Game_k \vec f_k \;\psi$, where $k \in \mathbb{N}$, $\Game_1 = \exists$, $\{\Game_2, \ldots, \Game_n\} \subseteq \{\exists, \forall\}$, each $f_i$ is a possibly empty tuple of proper function symbols, and $\psi$ is an $\SO_2$ formula in which all quantifications are of propositional variables. If $\Game_1 = \forall$, the corresponding fragment is called $\PiQPL{k}$.
The restriction to formulae where each function symbol $f$ occurs only with a fixed tuple $\vec{a}_f$ of arguments is denoted by the suffix $\cdot^\mathrm{u}$.
\end{definition}
Let $\LL$ be some logic. The problems $\TRUE(\LL)$, $\SAT(\LL)$, and  $\VAL(\LL)$ are defined as follows: Given a formula $\phi\in\LL$, decide whether the formula is true, satisfiable, or valid, respectively.

The restricted variant $\SOQPL$ of second-order logic $\SO$ is obviously decidable due to its finite domain. Moreover it captures exactly the levels of the exponential hierarchy.

\begin{proposition}[\cite{lohrey_model-checking_2012}]\label{thm:qbsf-completeness}
For any $k \geq 1$ the following problems restricted to prenex formulae are complete \wrt $\leqpm$:
$\TRUE(\SigmaQPL{k})$ is $\SigmaE{k}$-complete, $\TRUE(\PiQPL{k})$ is $\PiE{k}$-complete,
and $\TRUE(\SOQPL)$ is $\AEXPPOLY$-complete.
\end{proposition}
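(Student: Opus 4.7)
The plan is to establish matching upper and lower bounds for each fragment in turn, exploiting the fact that an $n$-ary Boolean function is specified by a truth table of size $2^n$, so that function quantification naturally introduces exponential blow-up.

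For the upper bound, I would simulate a prenex $\SOQPL$ formula $\phi = \Game_1 \vec f_1\,\dots\,\Game_n \vec f_n\,\Game_1' x_1 \dots \Game_m' x_m\, \psi$ by an alternating Turing machine. Let $r$ be the maximal arity occurring in $\phi$ and $N$ the number of function symbols, both bounded by $\size{\phi}$. Each $\Game_i \vec f_i$ block is handled by entering an existential resp.\ universal state and guessing the truth tables of the functions in $\vec f_i$, which takes time $\bigO{N \cdot 2^r} = 2^{\size{\phi}^{\bigO{1}}}$. The final propositional prefix $\Game_1' x_1 \dots \Game_m' x_m$ is then absorbed into the matrix by further branching, and the quantifier-free matrix is evaluated bottom-up in exponential time using table lookups for the $S(f_i)$. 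Overall this uses exponential time and the number of alternations between $\exists$- and $\forall$-states on function blocks is at most $k$ for $\SigmaQPL{k}$ / $\PiQPL{k}$ (the polynomially many alternations on propositional quantifiers can be collapsed into one block at the end by a standard trick, \eg, by folding them into the matrix via additional functions, or simply not counting them toward the outer alternations). This yields $\SigmaE{k}$, $\PiE{k}$, and $\AEXPPOLY$, respectively.

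For the lower bounds I would reduce from acceptance of a $k$-alternating exponential-time Turing machine $M$ on input $w$ of length $n$, which is $\SigmaE{k}$-complete when $M$ begins in an existential state. A configuration of $M$ has width $2^{n^c}$ for some constant $c$, so set the arity $r \dfn n^c$ and use function symbols of arity $r$ (for cell positions) or $2r$ (for pairs time/position) to encode the existential and universal guesses made by $M$ in each alternation block. Concretely, the content of the tape and head state between two consecutive alternations is represented by a tuple $\vec f_i$ of functions with $\vec f_i(\vec t, \vec p)$ encoding the symbol, head indicator, and state at time $\vec t$ and cell $\vec p$; the $i$-th alternation of $M$ is reflected by $\Game_i \vec f_i$ in the prefix. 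The correctness of the tableau (initial configuration, local transition rule between adjacent cells and consecutive time steps, acceptance) is then expressed by a polynomial-size propositional formula of the form
\[
\forall \vec t \forall \vec p\; \bigl(\mathrm{init}(\vec t, \vec p) \wedge \mathrm{step}(\vec t, \vec p) \wedge \mathrm{accept}(\vec t)\bigr),
\]
where $\vec t, \vec p$ are now polynomially many propositional variables and $\mathrm{step}$ etc.\ are built from the terms $f_i(\vec t, \vec p)$, $f_i(\vec t{+}1, \vec p{\pm}1)$ (with propositional subformulas implementing the increments). This produces a formula in prenex form with exactly $k$ function-quantifier alternations starting with $\exists$, hence in $\SigmaQPL{k}$; mirroring the construction for $\PiQPL{k}$ and allowing unboundedly many alternations for $\AEXPPOLY$ finishes the lower bounds.

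The principal obstacle is bookkeeping: one has to make sure that the $k$ alternations of the ATM translate into \emph{exactly} $k$ alternations of function quantifiers, that the trailing universal propositional block over $\vec t, \vec p$ can be performed \emph{after} all function quantifiers without adding to the alternation count, and that the successor relation on exponential-length time/position indices is definable by a fixed quantifier-free formula over the propositional variables $\vec t, \vec p$. The last point is routine (binary addition with carry is expressible as a polynomial-size propositional formula), but combining it cleanly with the quantifier prefix — in particular, moving propositional universal quantifiers past existential function quantifiers without changing semantics — is the place where the proof requires care. Once this is handled, the reduction is clearly computable in logspace, yielding hardness under $\leqlogm$ and a fortiori under $\leqpm$.
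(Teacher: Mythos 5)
The paper does not prove this proposition at all: it is imported verbatim from Lohrey~\cite{lohrey_model-checking_2012}, so there is no in-paper argument to compare yours against. Your sketch is the standard self-contained proof of that cited result, and its overall architecture --- guessing truth tables of size $2^{r}$ for each function block in the upper bound, and encoding the phase-by-phase tableaux of a $k$-alternating exponential-time machine as quantified Boolean functions $f_i(\vec t,\vec p)$ in the lower bound --- is sound and is essentially how the result is established in the literature.

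One point in your upper bound needs tightening. You cannot ``simply not count'' the trailing propositional quantifiers toward the alternation bound, and ``absorbing them into the matrix by further branching'' would add alternations and break the $\SigmaE{k}$ bound. The correct justification is that the inner part $\Game_1' x_1\dots\Game_m' x_m\,\psi$ is an instance of $\QBF$ over polynomially many variables once the function tables are fixed, hence decidable \emph{deterministically} in time $2^{m}\cdot\size{\psi}$, i.e.\ within the exponential time budget of the last alternation block; so it contributes zero alternations. This also repairs a second loose end in your lower bound: for a universally quantified phase $\forall\vec f_i$ the tableau-validity condition sits in the antecedent of an implication, so the matrix cannot have the pure $\forall\vec t\,\forall\vec p$ form you display --- it needs existential propositional quantifiers as well. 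That is harmless precisely because the paper's definition of $\SigmaQPL{k}$ allows $\psi$ to be an arbitrary formula whose quantifications are all propositional, and because such a matrix is still evaluated deterministically. With those two clarifications your argument goes through.
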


\section{Dependency quantified propositional formulae}\label{sec:dqbf}

In the previous section we considered second-order propositional logic. Now we turn to logics in which functions are quantified only implicitly in form of \emph{Skolem functions} of variables. Well-known such logics are \emph{dependency quantified propositional formulae} ($\DQBF$), but also \emph{independence-friendly logic} ($\IF$) by Hintikka and Sandu~\cite{hintikka89}. They have in common the syntactical property that Skolem functions are specified by denoting \emph{constraints} for quantified variables.
It is worth noting that we get the standard quantified propositional logic by restricting attention to formulae of $\SOQPL$ in which it is only allowed to quantify functions of arity $0$. Furthermore, $\DQBF$ correspond to the fragment  $\ESOQPLuniq$. In this section we introduce a generalization of $\DQBF$ that analogously corresponds to the full logic $\SOQPLuniq$.

We start by giving the definition of  $\DQBF$ and some required notation. For the definitions related to $\DQBF$, we follow Virtema~\cite{virtema14}.
 For a set $C$ of propositional variables, we denote by $\vec{c}$ the canonically ordered tuple of the variables in the set $C$. We refer to usual propositional assignments, in contrast to function assignments, by $s$ instead of $S$.

A formula that does not have any free variables is called \emph{closed} (or a  \emph{sentence}).  A \emph{simple qBf} is a closed qBf of the type
$
\phi \dfn \forall p_1 \cdots \forall p_{n}\exists q_1\cdots \exists q_m\theta,
$
where $\theta$ is a propositional formula and the propositional variables $p_i,q_j$ are all distinct. Any tuple $(C_1,\dots,C_m)$ such that $C_1,\dots,C_m\subseteq \{p_1,\dots,p_n\}$ is called a \emph{constraint} for $\phi$.
\begin{definition}\label{DQBF-semantics}
A simple qBf $\forall p_1 \cdots \forall p_{n}\exists q_1\cdots \exists q_m\theta$ is \emph{true under a constraint} $(C_1,\dots,C_m)$ if there exist functions $f_1,\dots,f_m$ with
$
f_i\colon \{0,1\}^{\lvert C_i\rvert}\to \{0,1\}
$
such that for each assignment $s\colon \{p_1,\dots,p_n\}\to\{0,1\}$,
$
s\big(q_1 \mapsto f_1(s(\vec{c}_1)), \dots, q_m \mapsto f_m(s(\vec{c}_m))\big)\models \theta.
$
\end{definition}

A \emph{dependency quantified propositional formula} is a pair $(\phi, \vec C)$, where $\phi$ is a simple quantified propositional formula and $\vec{C}$ is a constraint for $\phi$. We say that $(\phi, \vec C)$ is \emph{true} if $\phi$ is true under the constraint $\vec C$.
Let $\DQBF$ denote the set of all dependency quantified propositional formulae.
\begin{proposition}[{\cite[5.2.2]{Peterson2001}}]\label{TDQBF problem}
$\TRUE(\DQBF)$ is $\NEXPTIME$-complete problem w.r.t.\ $\leqlogm$.
\end{proposition}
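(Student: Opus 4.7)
The plan is to establish both directions separately: first $\NEXPTIME$-membership by a generic guess-and-verify argument, then $\NEXPTIME$-hardness by a tableau encoding of an exponential-time nondeterministic Turing machine.

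For the upper bound, observe that by \Cref{DQBF-semantics}, an instance $(\phi, \vec{C})$ with $\phi = \forall p_1\cdots\forall p_n\exists q_1\cdots\exists q_m\, \theta$ and $\vec{C} = (C_1,\dots,C_m)$ is true iff there exist Skolem functions $f_i\colon \{0,1\}^{|C_i|}\to\{0,1\}$ witnessing all $2^n$ assignments to $\vec{p}$. Each $f_i$ has truth table of size at most $2^n$, so the total amount of nondeterministic bits needed to describe $f_1,\dots,f_m$ is bounded by $m\cdot 2^n$, which is exponential in the input size. After guessing, verification iterates over the $2^n$ assignments $s\colon\{p_1,\dots,p_n\}\to\{0,1\}$, computes each $q_j \dfn f_j(s(\vec{c}_j))$ by a table lookup, and evaluates the propositional matrix $\theta$; this runs in deterministic exponential time. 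Hence $\TRUE(\DQBF)\in\NEXPTIME$.

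For hardness, I would reduce from acceptance of an $\NEXPTIME$ Turing machine $M$ running in time $2^{p(n)}$ on input $x$ with $|x|=n$. An accepting computation is encoded as a $2^{p(n)}\times 2^{p(n)}$ tableau whose cell $(t,i)$ stores a constant-size symbol from $Q\times\Sigma\cup\Sigma$. The central idea is to represent the tableau by Skolem functions $g_1,\dots,g_\ell$ of arity $2p(n)$, one bit per cell content: if $\vec{t},\vec{i}\in\{0,1\}^{p(n)}$ are the binary encodings of row and column, then $g_j(\vec{t},\vec{i})$ gives the $j$-th bit of the symbol at position $(t,i)$. The DQBF then has shape
\[
\forall \vec{t}\,\forall \vec{i}\,\forall \vec{t}'\,\forall \vec{i}'\,\forall\vec{i}''\,\exists q_1\cdots\exists q_m\,\theta(x)
\]
with constraints forcing each Skolem function corresponding to $q$'s representing $g_j(\vec{t},\vec{i})$ to depend only on $(\vec{t},\vec{i})$, and analogous copies depending only on the neighboring coordinates $(\vec{t}',\vec{i}'), (\vec{t}',\vec{i}''), \ldots$ needed to check local window consistency. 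The matrix $\theta(x)$ encodes that (i) row $0$ equals the initial configuration for $x$; (ii) some cell in the final row contains the accept state $q_{\mathrm{acc}}$; and (iii) every $2\times 3$ local window $((\vec{t}',\vec{i}'),\ldots,(\vec{t}'=\vec{t}+1,\vec{i}''))$ is consistent with $M$'s transition relation. Polynomially-sized subformulae suffice to express the successor relation $\vec{t}'=\vec{t}+1$ and the equalities $\vec{i}'' \in\{\vec{i}'-1,\vec{i}',\vec{i}'+1\}$ on $p(n)$-bit strings.

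The main obstacle is the \emph{consistency problem}: the same cell $(t,i)$ is referenced both as the center of one window and as the neighbor of another, so its content must agree across all corresponding Skolem functions. This is handled by the Henkin-style constraints: by choosing the dependency sets $C_j$ to contain exactly the coordinate variables that should index the given copy of the tableau, one forces any two Skolem functions with identical $C_j$ that are constrained by $\theta$ to coincide on overlapping inputs actually to coincide everywhere. Verifying that this encoding is correct and constructible in logarithmic space is routine; correctness then follows since a satisfying choice of Skolem functions is in bijection with accepting computations of $M$ on $x$.
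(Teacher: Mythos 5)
The paper does not prove this proposition at all---it is imported verbatim from Peterson et al.\ (Theorem 5.2.2 of the cited work)---so there is no in-paper proof to compare against; your argument is the standard one and is essentially what the cited source does. Your membership half coincides with the paper's own \Cref{thm:dqbf-membership} specialized to $k=1$ (guess the exponential-size truth tables of the Skolem functions, then verify all $2^n$ assignments deterministically), and your hardness half is the usual tableau reduction in which dependency constraints restrict each cell-content function to its own copy of the address variables. One imprecision worth fixing in the hardness sketch: the consistency mechanism does not concern ``two Skolem functions with identical $C_j$'' --- the whole point is that the copies of the tableau function have \emph{disjoint} dependency sets (different copies of the address variables), and it is the universally quantified equality clause in $\theta$ (``if $(\vec{t},\vec{i})=(\vec{t}',\vec{i}')$ then the two function values agree'') that forces them to encode the same function; this is the same device as the formula $\psi_1$ in the paper's \Cref{thm:simple-prenex-so}. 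With that clarified, the sketch is correct.
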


We next introduce a novel variant of $\DQBF$ called \emph{alternating dependency quantified propositional formulae}  $\ADQBF$.
The syntax of  \emph{alternating quantified propositional formulae} extends the syntax of quantified propositional formulae with
  a new quantifier $\U$. The quantifier $\U$ is used to  express universal quantification of Skolem functions of  propositional symbols.

\begin{definition}
A \emph{simple $\Sigma_k$-alternating qBf} is a closed formula of the form 
\[
\phi \dfn \forall p_1 \cdots \forall p_{n} \, (\exists q^1_1\cdots \exists q^1_{j_1}) \, (\U q^2_1 \cdots \U q^2_{j_2}) \,  (\exists q^3_1 \cdots \exists q^3_{j_3}) \dots  (\, Q q^k_1 \cdots Q q^k_{k}) \,   \theta,
\]
where $Q\in\{\exists,\U\}$, $\theta$ is a propositional formula and the quantified propositional variables are all distinct.
Similarly, a \emph{simple $\Pi_k$-alternating qBf} is a closed formula of the form 
\[
\phi \dfn \forall p_1 \cdots \forall p_{n} \, (\U q^1_1\cdots \U q^1_{j_1}) \, (\exists q^2_1 \cdots \exists q^2_{j_2}) \,  (\U q^3_1 \cdots \U q^3_{j_3}) \dots  (\, Q q^k_1 \cdots Q q^k_{j_k}) \,   \theta.
\]
A \emph{simple alternating qBf} is a simple $\Sigma_i$-alternating or  $\Pi_i$-alternating qBf for some $i$.
Any tuple $(C^1_1,\dots,C^k_{j_k})$ such that $C^1_1,\dots,C^k_{j_k}\subseteq \{p_1,\dots,p_n\}$ is called a \emph{constraint} for $\phi$.
\end{definition}

\begin{definition}\label{def:semantics-adqbf}
Truth of a simple alternating qBf under a constraint $(C^1_1,\dots,C^k_{j_k})$  is defined by generalizing Definition \ref{DQBF-semantics} such that each  $\U q^l_i$ is interpreted  as universal quantification over (Skolem) functions  $f^l_i\colon \{0,1\}^{\lvert C^l _i\rvert}\to \{0,1\}$.

\end{definition}

\begin{example}
Let $\phi\dfn \forall p_1\forall p_2 \exists q_1 \U q_2\, \theta$ and $C\dfn ( \{p_2\}, \{p_1\} )$. Now $(\phi,C)$ is true iff there exists a function $f_1:\{0,1\} \to \{0,1\}$ \suchthat for all functions $f_2:\{0,1\} \to \{0,1\}$ it holds that for each assignment $s\colon \{p_1,p_2\}\to\{0,1\}$,
$s\big(q_1 \mapsto f_1(s(p_2)), q_2 \mapsto f_2(s(p_1))\big)\models \theta$.
\end{example}

\begin{example}The formula $\forall \tuple x (\U y \exists z) \neg y \leftrightarrow z$ under the constraint $(\{\tuple x\},\{\tuple x\})$ expresses that every $|\tuple x|$-ary Boolean function has a negation.
\end{example}

\begin{definition}
The set $\ADQBF$ is the set of all pairs $(\phi, \tuple C)$ where $\phi$ is a simple alternating qBf and $\tuple C$ is a constraint of $\phi$.
The set $\Sigma_k$-$\ADQBF$ ($\Pi_k$-$\ADQBF$) is then the subset of $\ADQBF$ where $\phi$ is $\Sigma_k$-alternating ($\Pi_k$-alternating).
\end{definition}

\begin{lemma}\label{thm:dqbf-membership}
For all $k \geq 1$ it holds that $\TRUE(\Sigma_k\text{-}\ADQBF) \in \SigmaE{k}$, $\TRUE(\Pi_k\text{-}\ADQBF) \in \PiE{k}$, and $\TRUE(\ADQBF) \in \AEXPPOLY$.
\end{lemma}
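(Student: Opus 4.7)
The plan is to simulate the semantics of \Cref{def:semantics-adqbf} directly on an alternating Turing machine. Given a $\Sigma_k$-alternating input $(\phi, \vec{C})$ with $n$ universal propositional variables $p_1,\dots,p_n$, the ATM proceeds in $k$ phases corresponding to the quantifier blocks of Skolem functions: in phase $l$ the machine either existentially guesses (if the block is $\exists$) or universally branches over (if the block is $\U$) the truth tables of the Skolem functions $f^l_i\colon \{0,1\}^{|C^l_i|} \to \{0,1\}$, and stores them on the tape. Since $|C^l_i| \leq n$, each such truth table has at most $2^n$ bits, so each phase can be written down in exponential space and carried out in exponential time; the whole sequence of phases is still exponential because only polynomially many functions are quantified. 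After the $k$-th phase, the ATM enters a deterministic verification subroutine that sequentially iterates over all $2^n$ assignments $s\colon \{p_1,\dots,p_n\}\to\{0,1\}$, for each $s$ computes $f^l_i(s(\vec{c}^l_i))$ by looking up the stored truth tables, and evaluates $\theta$ under the resulting extended assignment; it accepts iff $\theta$ holds for every $s$.

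Counting alternations, the machine uses exactly $k-1$ alternations between its $k$ guessing phases, starting existentially for $\Sigma_k$ and universally for $\Pi_k$, and the deterministic final phase introduces no further alternation. Since the total running time is $2^{\mathrm{poly}(|\phi|)}$, this yields $\TRUE(\Sigma_k\text{-}\ADQBF) \in \SigmaE{k}$ and $\TRUE(\Pi_k\text{-}\ADQBF) \in \PiE{k}$. For the unbounded case $\TRUE(\ADQBF)$, the number $k$ of blocks is bounded by $|\phi|$, so the same construction yields at most polynomially many alternations within exponential time, witnessing membership in $\AEXPPOLY$.

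The main point that needs care is the treatment of the innermost $\forall p_1\cdots\forall p_n$: a naive simulation would turn it into a universal branching step after the last Skolem-function phase, potentially costing one extra alternation and pushing $\Sigma_k$ into $\SigmaE{k+1}$ whenever the last Skolem block is $\exists$. The fix is the observation that, since $n$ is polynomial in $|\phi|$, the $2^n$ candidate assignments can be enumerated sequentially inside a single deterministic subroutine running in exponential time, so that no additional alternation is incurred. Apart from this subtlety, the argument is a routine size-vs-time analysis of Skolem-function encodings, entirely analogous to the $\NEXPTIME$ upper bound of \Cref{TDQBF problem} but with the guessing phase split into $k$ alternating segments.
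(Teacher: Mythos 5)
Your proposal is correct and follows essentially the same route as the paper's proof: guess (or universally branch over) the exponential-size truth tables of the Skolem functions block by block, then verify $\theta$ deterministically over all $2^n$ assignments to $p_1,\dots,p_n$ in exponential time, so that the alternation count is governed solely by the $\exists$/$\U$ blocks. The paper states the final deterministic evaluation step more tersely, but the observation you highlight — that the universal propositional prefix is absorbed into a deterministic exponential-time loop rather than an extra alternation — is exactly what its argument relies on.
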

\begin{proof}
	We give a brute-force algorithm. Let the universal quantified prefix of the given qBf be $\forall p_1 \ldots \forall p_n$. For every $\exists$-quantified block $\exists q_1 \ldots \exists q_{j}$ with constraints $C_1,\ldots,C_{j}$, existentially guess and write down a Boolean function from the variables $C_i \subseteq \{p_1,\ldots,p_n\}$ for every $q_i$. For every $\U$-quantified block, switch to universal branching and do the same. The quantifier-free part can then be evaluated in deterministic exponential time for every possible assignment to $p_1,\ldots,p_n$. The algorithm runs in exponential time and its alternations are bounded by the alternations of $\exists$ and $\U$ quantifiers in the formula.
\end{proof}

For the hardness direction we first show how the uniqueness property can be obtained for arbitrary $\SOQPL$-formulae by introducing additional function symbols.
The following lemma will be needed in the sequel (see, \eg, Väänänen \cite{vaananen07}).
\begin{lemma}\label{thm:quantifier-swap}
Let $S$ be an $\SOQPL$ interpretation, $x$ a propositional variable, $f$ and $f'$ function variables with $\arity{f'} = \arity{f} + 1$, and $\phi(x, f)$ an $\SOQPL$-formula in which $x$ and $f$ occur only as free variables and in which $f'$ does not occur. Then it holds that $S \models \forall x\exists f \phi \Leftrightarrow S \models \exists f'\, \forall x \phi'$ and $S \models \exists x\forall f \phi \Leftrightarrow S \models \forall f' \exists x \phi'$,
where $\phi'$ is obtained from $\phi$ by replacing $f(\vec{y})$ with $f'(x,\vec{y})$.
\end{lemma}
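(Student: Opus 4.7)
The lemma is the standard Skolemization/swap result adapted to the propositional second-order setting. Since $x$ is a propositional variable, it only ranges over $\{0,1\}$, and a Boolean function of arity $n+1$ is nothing but a pair of Boolean functions of arity $n$, one for each value of the first argument. My plan is to exploit exactly this correspondence.

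I would first isolate the key substitution fact as a separate claim: for any $\SOQPL$-interpretation $T$, any $b \in \{0,1\}$, and any Boolean functions $F$ of arity $\arity{f}$ and $F'$ of arity $\arity{f}+1$ that agree in the sense $F(\vec{y}) = F'(b, \vec{y})$ for all $\vec{y}$, we have
\[
\llbracket \phi \rrbracket_{T^x_b{}^f_F} \;=\; \llbracket \phi' \rrbracket_{T^x_b{}^{f'}_{F'}}.
\]
This is proved by a routine induction on the structure of $\phi$. The atomic case $f(\vec{\psi})$ is exactly where the substitution pays off: by definition of $\phi'$ it becomes $f'(x,\vec{\psi})$, and the semantic clause together with $T^x_b(x) = b$ gives $F'(b, \llbracket \vec{\psi} \rrbracket) = F(\llbracket \vec{\psi} \rrbracket)$. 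All other atomic and connective cases are immediate because $f'$ does not occur in $\phi$ and $x, f$ have the same denotation on both sides.

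Using this, I would prove the first equivalence directly. For ($\Rightarrow$), assume $S \models \forall x \exists f\, \phi$. For each $b \in \{0,1\}$ pick a witness $F_b$ with $S^x_b{}^f_{F_b} \models \phi$, and define the Boolean function $F'$ of arity $\arity{f}+1$ by $F'(b,\vec{y}) \dfn F_b(\vec{y})$. Applying the substitution claim to each $b$ yields $S^{f'}_{F'}{}^x_b \models \phi'$ for both $b$, hence $S \models \exists f'\,\forall x\, \phi'$. For ($\Leftarrow$), given a witness $F'$, define $F_b(\vec{y}) \dfn F'(b,\vec{y})$ and run the substitution claim in the other direction.

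The second equivalence follows by duality: apply the first equivalence to $\neg \phi$ in place of $\phi$, observing that $(\neg\phi)' = \neg(\phi')$ since the replacement $f(\vec y) \mapsto f'(x,\vec y)$ commutes with all connectives, and then negate both sides using that $\exists f$ and $\forall f$, resp.\ $\exists x$ and $\forall x$, are De Morgan duals in the $\SOQPL$ semantics given in the paper. No step is technically deep; the only mild obstacle is being careful in the substitution lemma about interpretations of $x$ and $f'$ when $\phi$ contains nested quantifiers that could reuse these names. Since the hypothesis restricts $x$ and $f$ to occur only freely and excludes $f'$ entirely, one can simply assume a fresh renaming of bound variables in $\phi$ to avoid capture, so the induction goes through uniformly.
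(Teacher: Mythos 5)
Your proof is correct. The paper does not actually prove this lemma --- it simply cites V\"a\"an\"anen's book --- and your argument (the bijection between $(\arity{f}+1)$-ary Boolean functions and $\{0,1\}$-indexed families of $\arity{f}$-ary ones, established via a substitution claim proved by induction, with the second equivalence obtained by De Morgan duality) is exactly the standard proof that the citation points to; your handling of the capture issue via the lemma's freeness hypotheses is also the right observation.
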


\begin{lemma}\label{thm:simple-prenex-so}
Every $\SigmaQPL{k}$-sentence \emph{(}$\PiQPL{k}$-sentence, $\SOQPL$-sentence\emph{)} $\phi$ can be translated to an equivalent simple prenex $\SigmaQPLuniq{k}$-sentence \emph{(}$\PiQPLuniq{k}$-sentence, $\SOQPLuniq$-sentence\emph{)} $\psi$ in polynomial time.
\end{lemma}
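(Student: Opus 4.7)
The plan is to translate in three polynomial-time syntactic rewriting stages, each preserving sentence-level equivalence and the fragment classification.

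First, I would prenex the formula by the standard rules, renaming bound variables and function symbols to avoid capture; the size growth is clearly polynomial. Second, I would push all proper-function quantifiers in front of all propositional ones. In the $\SigmaQPL{k}$ and $\PiQPL{k}$ cases the function prefix is already outermost by definition, so only the internal $\SO_2$ body needs prenexing. For general $\SOQPL$-sentences I would repeatedly apply \Cref{thm:quantifier-swap}: any pattern $\forall x\,\exists f$ (or $\exists x\,\forall f$) with $x$ propositional and $f$ proper is rewritten as $\exists f'\,\forall x$ (resp.\ $\forall f'\,\exists x$), with $f'$ of arity $\arity{f}+1$ and every $f(\vec y)$ inside replaced by $f'(x,\vec y)$. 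A single function symbol is pushed past at most $m$ propositional quantifiers (where $m$ is their total count), so arities grow additively and the overall blow-up is polynomial; crucially the function-level alternation pattern is preserved, so the $\Sigma_k/\Pi_k$ classification is untouched.

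After the first two stages the formula has the shape $\Game_1\vec f_1\cdots\Game_k\vec f_k\;Q'_1 x_1\cdots Q'_m x_m\,\psi$ with $\psi$ quantifier-free. For simplicity I would rewrite each application $f(\theta_1,\ldots,\theta_n)$ whose arguments are non-atomic using the equivalence
\[
f(\theta_1,\ldots,\theta_n)\;\equiv\;\exists w_1\cdots\exists w_n\,\Bigl(\bigwedge_{i=1}^n(w_i\leftrightarrow\theta_i)\wedge f(w_1,\ldots,w_n)\Bigr),
\]
with $\vec w$ fresh, proceeding from innermost applications outward, and then prenex the new propositional quantifiers into the tail of the propositional block (some flip to $\forall$ under negations, but the block tolerates arbitrary propositional quantifier patterns). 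For uniqueness I treat each function symbol $f$ in parallel: letting $\vec w^{(1)},\ldots,\vec w^{(r)}$ enumerate the propositional argument tuples at which $f$ now occurs, I introduce a fresh canonical tuple $\vec y^f$ of propositional variables and fresh propositions $z^f_1,\ldots,z^f_r$, replace every body occurrence $f(\vec w^{(j)})$ by $z^f_j$, and conjoin
\[
\forall \vec y^f\,\bigwedge_{j=1}^r\,\Bigl((\vec y^f\leftrightarrow \vec w^{(j)})\;\to\;(z^f_j\leftrightarrow f(\vec y^f))\Bigr),
\]
where $\vec y^f\leftrightarrow \vec w^{(j)}$ abbreviates the componentwise biconditional. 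The new quantifiers $\exists z^f_1\ldots\exists z^f_r\,\forall \vec y^f$ are appended to the propositional block. The constraint forces $z^f_j=f(\vec w^{(j)})$, so equivalence is preserved, while $f$ now appears only in the single syntactic pattern $f(\vec y^f)$ and every other new symbol is a proposition (trivially uniquely applied).

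The main obstacle I anticipate is achieving uniqueness \emph{syntactically}: a naïve equality such as $\forall \vec y\,(f_1(\vec y)\leftrightarrow f_2(\vec y))$ would reintroduce mixed argument tuples for the $f_j$. The key trick above is to reuse one universally quantified canonical tuple $\vec y^f$ as the sole argument of $f$ both in the equality constraint and, indirectly via the $z^f_j$, throughout the body. One also has to verify that the inserted propositional quantifiers never migrate into the function-quantifier prefix so that the $\Sigma_k/\Pi_k$ classification survives — this is immediate from placing them at the tail of the propositional block — and that the arity blow-up in the second stage remains additive rather than multiplicative, which is guaranteed by the statement of \Cref{thm:quantifier-swap}.
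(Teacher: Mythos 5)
Your proposal is correct, and its first two stages --- prenexing and pulling the proper-function quantifiers in front of the propositional ones via \Cref{thm:quantifier-swap}, plus the elimination of non-atomic function arguments through fresh propositions --- coincide with the paper's proof (the paper performs the simplification step before prenexing and uses the dual $\forall b\,((b\leftrightarrow\psi_i)\to\cdots)$ form, but these choices are interchangeable). Where you genuinely diverge is the uniqueness step. The paper splits each function symbol $f$ occurring at argument tuples $\tuple x_1,\ldots,\tuple x_n$ into $n$ fresh symbols $f_1,\ldots,f_n$, each applied to its own universally quantified fresh tuple $\tuple y_i$, forces $f_1,\ldots,f_n$ to agree via a chain of guarded biconditionals, and replays the matrix under the guard $\bigwedge_i \tuple x_i\leftrightarrow\tuple y_i$; universal function quantifiers are treated ``analogously''. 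You instead keep a single $f$, apply it only to one canonical universally quantified tuple $\vec y^f$, and route its values into the matrix through existentially quantified output bits $z^f_j$ pinned down by the constraint $(\vec y^f\leftrightarrow\vec w^{(j)})\to(z^f_j\leftrightarrow f(\vec y^f))$. Both constructions are polynomial and preserve the $\Sigma_k/\Pi_k$ prefix; your version has the small advantages of not multiplying function symbols and of handling $\exists f$ and $\forall f$ uniformly (the equivalence of the two matrices holds pointwise for every fixed interpretation of $f$, so how $f$ is quantified is irrelevant), at the cost of appending an extra $\exists\vec z\,\forall\vec y$ suffix to the propositional block --- which, as you correctly note, is harmless since that block admits arbitrary propositional quantifier alternation.
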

\begin{proof}
First we prove that $\phi$ can be transformed into a simple formula in polynomial time. Whenever a subformula $\xi = f(\psi_1,\ldots,\psi_i,\ldots,\psi_n)$ occurs and $\psi_i$ is not a proposition, then replace $\xi$ by $\forall b \big( (b\leftrightarrow \psi_i) \rightarrow f(\psi_1,\ldots,b,\ldots,\psi_n) \big)$, where $b$ is a new proposition symbol.

Then by the usual translation move all quantifiers to the beginning of the formula. Swap the order of the quantifiers according to \Cref{thm:quantifier-swap} until all quantified proper function symbols precede the quantified propositions. Such obtained $\phi$ is simple and of the form $\Game_1 f_1  \dots \Game_n f_n \forall p_1 \exists q_1 \dots \forall p_m \exists q_m \theta$ where $\theta$ is quantifier-free, $\{\Game_1,\ldots,\Game_n\} \subseteq \{\exists, \forall\}$, and $f_1,\ldots,f_n$ are the only proper functions that occur in $\phi$.

Finally we \enquote{split} the quantified function symbols in $\phi$ \suchthat every proper function symbol occurs with exactly one fixed argument tuple. Let
$\chi=\exists f \Game_1 g_1 \dots \Game_k g_k \Game'_1 p_1 \dots \Game'_m p_m \,\theta(f(\tuple x_1), \ldots, f(\tuple x_n))$
where $\theta$ is quantifier-free be a subformula of $\phi$, meaning that $f$ occurs in $\theta$ at $n$ different positions with $n$ (possibly different) argument tuples $\tuple x_1, \ldots, \tuple x_n$. In what follows, $f_1 \ldots f_n$ and $\tuple y_1, \ldots, \tuple y_n$ are assumed to be distinct and fresh. Then $\phi$ is equivalent to the formula obtained from $\phi$ by substituting $\chi$ by the formula
$\exists f_1 \ldots \exists f_n \Game_1 g_1 \dots \Game_k g_k \Game'_1 p_1 \dots \Game'_m p_m\, \forall \,\tuple y_1 \ldots \,\forall\, \tuple y_n\,(\psi_1 \land \psi_2)$,
where $\psi_1 \dfn \bigwedge_{i = 1}^{n - 1} \big((\tuple y_i \leftrightarrow \tuple y_{i+1}) \imp (f_i(\tuple y_i) \leftrightarrow f_{i+1}(\tuple y_{i+1}))\big)$ ensures that the functions $f_1, \ldots, f_n$ are all the same, and $\psi_2 \dfn \left(\bigwedge_{i = 1}^n \tuple x_i \leftrightarrow \tuple y_i\right) \imp \Big(\theta(f_1(\tuple y_1), \ldots, f_n(\tuple y_n))\Big)$ simulates $\theta(f_1(\tuple x_1), \ldots, f_n(\tuple x_n))$.
The \enquote{split} of universal quantifiers is done analogously. 
Clearly $\phi$ remains simple and in prenex form.

The steps introduced  above do not add additional alternations of function quantifiers, hence the resulting formula is now an $\SigmaQPLuniq{k}$ resp.\ $\PiQPLuniq{k}$ resp. $\SOQPLuniq$ sentence.
\end{proof}

\begin{theorem}\label{thm:odd-k-dqbf-hardness}
Let $k \geq 1$. For odd $k$ the problem $\TRUE(\Sigma_k\text{-}\ADQBF)$ is $\SigmaE{k}$-complete. For even $k$ the problem $\TRUE(\Pi_k\text{-}\ADQBF)$ is $\PiE{k}$-complete.
The problem $\TRUE(\ADQBF)$ is $\AEXPPOLY$-complete.
\end{theorem}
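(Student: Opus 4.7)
The upper bounds are supplied by Lemma \ref{thm:dqbf-membership}, so only matching hardness remains. The plan is to reduce, by polynomial-time many-one reductions, from the canonical complete problems of Proposition \ref{thm:qbsf-completeness}: $\TRUE(\SigmaQPL{k})$ for odd $k$, $\TRUE(\PiQPL{k})$ for even $k$, and $\TRUE(\SOQPL)$ for the $\AEXPPOLY$ case, all restricted to prenex input.

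Given such a prenex input, the first step is to apply Lemma \ref{thm:simple-prenex-so} to obtain an equivalent simple prenex $\SigmaQPLuniq{k}$ (respectively $\PiQPLuniq{k}$, $\SOQPLuniq$) sentence of the shape
\[
\Game_1 \vec{f}_1 \cdots \Game_k \vec{f}_k\; \Game'_1 x_1 \cdots \Game'_m x_m\; \theta,
\]
in which each proper function symbol occurs with a single fixed argument tuple and $\theta$ is quantifier-free. The main remaining task is to reshape the propositional prefix so that only universal propositional quantifiers remain. For every existential propositional quantifier $\exists x_j$ I introduce a fresh function symbol $g_j$ whose argument tuple $\vec{p}_j$ comprises precisely the universal propositional variables occurring to the left of $x_j$, and I replace every occurrence of $x_j$ in $\theta$ by $g_j(\vec{p}_j)$. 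Iterated application of Lemma \ref{thm:quantifier-swap} then justifies moving the resulting $\exists g_j$ leftward past the entire propositional prefix, so that it joins the innermost existential block of proper function quantifiers. Doing this for every existentially quantified proposition produces an equivalent simple prenex sentence of the form
\[
\Game_1 \vec{f}_1 \cdots \Game_k \vec{f}'_k\; \forall p_1 \cdots \forall p_n\; \theta',
\]
still satisfying the uniqueness property. Reading each block $\Game_i \vec{f}'_i$ as the corresponding existential or universal Skolem function block of an $\ADQBF$ and taking the fixed argument tuple of each function symbol as its constraint produces an equivalent $\Sigma_k$-$\ADQBF$ (respectively $\Pi_k$-$\ADQBF$, $\ADQBF$) instance.

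The principal obstacle is keeping the alternation count at exactly $k$. Merging the new existential function quantifiers $\exists g_j$ into $\vec{f}_k$ only preserves $k$ alternations when $\Game_k = \exists$, which is precisely the case when $k$ is odd in $\SigmaQPL{k}$ and when $k$ is even in $\PiQPL{k}$; this explains the parity restrictions in the statement. For the $\AEXPPOLY$ case the issue disappears, since the alternation bound is merely polynomial and at worst one additional existential block is appended at the innermost level. All construction steps run in polynomial time, hence the overall reduction is indeed a polynomial-time many-one reduction and transfers hardness from the source problems to the respective $\ADQBF$ variants.
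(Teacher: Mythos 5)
Your overall strategy coincides with the paper's: membership from \Cref{thm:dqbf-membership}; hardness by reducing from the prenex-complete problems of \Cref{thm:qbsf-completeness} after normalizing via \Cref{thm:simple-prenex-so}; Skolemization of each existentially quantified proposition into a constrained function whose constraint is the set of universal propositions to its left; and merging these new existential function quantifiers into the innermost block, which preserves the alternation count exactly when that block is existential (odd $k$ for $\Sigma_k$, even $k$ for $\Pi_k$, and harmlessly otherwise in the unbounded case). The parity analysis and the resource accounting are correct, and reading off the constraint tuples from the fixed argument tuples is the same final step the paper takes.

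There is, however, one concrete gap. After \Cref{thm:simple-prenex-so} the sentence is simple, i.e.\ every function symbol occurs with a fixed tuple of \emph{propositions} as arguments, but nothing prevents some of those argument propositions from being existentially quantified (consider $\exists f\, \forall p\, \exists x\, \theta(f(p,x))$). Your substitution of every occurrence of $x_j$ by $g_j(\vec{p}_j)$ then produces nested applications $f_i(\ldots, g_j(\vec{p}_j), \ldots)$, so the formula is no longer simple, and the final step --- taking the fixed argument tuple of each function symbol as its $\ADQBF$ constraint --- breaks down: a constraint must be a subset of the universally quantified propositions $\{p_1,\ldots,p_n\}$, and $g_j(\vec{p}_j)$ is not such a proposition. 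The paper confronts exactly this issue (``some $a^i_j$ may be existentially quantified and thus not part of the $p_1,\ldots,p_n$'') and repairs it by replacing each occurrence $f_i(\tuple a)$ in negation normal form by $\forall \tuple r\,\big((\tuple a \leftrightarrow \tuple r) \imp f_i(\tuple r)\big)$, and each $\neg f_i(\tuple a)$ by $\forall \tuple r\,\big((\tuple a \leftrightarrow \tuple r) \imp \neg f_i(\tuple r)\big)$, with fresh \emph{universally} quantified $\tuple r$ that are appended to the universal propositional prefix; handling the negated occurrences separately is what keeps all newly introduced propositional quantifiers universal. You need this guard step (or an equivalent re-normalization) before your last step; with it inserted, your argument matches the paper's proof.
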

\begin{proof}The membership was shown in \Cref{thm:dqbf-membership}.
For the hardness we start with the problem $\TRUE(\Sigma_k\text{-}\ADQBF)$. We give a reduction from $\TRUE(\SigmaQPLuniq{k})$ which is by Proposition \ref{thm:qbsf-completeness} and Lemma \ref{thm:simple-prenex-so} $\leqpm$-complete for $\SigmaE{k}$. Let $\phi \dfn Q_1 \vec{f_1} Q_2 \vec{f_2} \ldots \exists \vec{f_k} \forall p_1 \exists q_1 \ldots \forall p_n \exists q_n \psi$, where $\psi$ is quantifier-free, be a simple prenex $\SigmaQPLuniq{k}$-sentence. Note that $Q_k = \exists$ since $k$ is odd.
For each function symbol $f_i$ that occurs in $\psi$, let $(a^i_1, \ldots, a^i_{m_i})$ denote the unique tuple that occurs as an argument of $f_i$.
Each of these functions with arguments can be simulated by a single constrained propositional variable; a problem in this translation is however that some $a^i_j$ may be existentially quantified and thus not part of the $p_1, \ldots, p_n$.
However, this problem can be easily solved by introducing fresh universally quantified propositional variables:

Assume that $\psi$ is in negation normal form. Any subformula $f_i(\tuple a)$ is replaced by $\forall \tuple r \big( (\tuple a \leftrightarrow \tuple r) \imp f_i(\tuple r)\big)$, where $\tuple r = r_1, \ldots, r_{\arity{f_i}}$ are fresh distinct variables. Analogously, $\neg f_i(\tuple a)$ is replaced by $\forall \tuple r \big( (\tuple a \leftrightarrow \tuple r) \imp \neg f_i(\tuple r)\big)$.  Clearly the such obtained sentence can be transformed to prenex form by just moving all the freshly introduced quantifiers to the right end of the quantifier prefix. The such obtained sentence is equivalent to $\phi$.
Thus we may assume \wloss that if $f_i(a^i_1, \ldots, a^i_{m_i})$ occurs in $\phi$, then $\{a^i_1, \ldots, a^i_{m_i}\} \subseteq \{ p_1, \ldots, p_n \}$.

For the reduction to $\ADQBF$ now just consider $\forall p_1 \ldots \forall p_n$ as universal quantified variables. For every $f_i(a^i_1, \ldots, a^i_{m_i})$ introduce a quantified variable $\exists f_i$ with constraint $\{ a^i_1, \ldots, a^i_{m_i} \}$ if $f_i$ is existentially quantified, and introduce $\U f_i$ with the same constraint otherwise. For every $q_i$ introduce a quantified variable $\exists q_i$ with constraint $\{p_1, \ldots, p_{i}\}$. Let $\psi'$ denote the formula obtained from $\psi$ by substituting, for each $i$, $f_i(a^i_1, \ldots, a^i_{m_i})$ by $f_i$. Thus the resulting $\ADQBF$ has the form
\(
\phi' \dfn \forall p_1 \ldots \forall p_n \; (\exists \vec{f_1}) \; (\U \vec{f_2}) \; \ldots \; (\exists \vec{f_k} \, \exists q_1 \,  \ldots \, \exists q_n) \; \psi'.
\)
Since the final function quantifier $Q_k$ was existential, we can merge the functions $\vec{f_k}$ and the quantified propositions $q_1\ldots q_n$ to a single existentially quantified block in $\ADQBF$.
So $\phi'$ is in $\Sigma_k$-$\ADQBF$, and by definition of $\ADQBF$, $\phi'$ is true under the constraint $C$ (where $C$ is constructed as above) if and only if $\phi \in \TRUE(\SigmaQPLuniq{k})$.

\smallskip

For general $\SOQPLuniq$ formulae we can again assume that the last function quantifier is existential. The same holds for $\PiQPLuniq{k}$ formulae if $k$ is even.
In these cases a similar proof yields a reduction to $\ADQBF$ resp.\ $\Pi_k$-$\ADQBF$.
\end{proof}

\begin{theorem}\label{thm:sigma-k-collapse}
Let $k \geq 2$. For even $k$ the problem $\TRUE(\Sigma_k\text{-}\ADQBF)$ is $\SigmaE{k-1}$-complete.
For odd $k$ the problem $\TRUE(\Pi_k\text{-}\ADQBF)$ is $\PiE{k-1}$-complete.
\end{theorem}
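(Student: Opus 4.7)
The plan is to establish matching $\SigmaE{k-1}$ (resp.\ $\PiE{k-1}$) upper and lower bounds by exploiting the fact that in both cases the innermost Skolem-function block of the formula is a universal ($\U$) block, which can be absorbed into the outer $\forall\vec p$-check without spending an ATM alternation.

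For the upper bound, I would consider an instance $(\phi,\vec C)$ of $\Sigma_k\text{-}\ADQBF$ with even $k$, where
\[
\phi = \forall p_1 \cdots \forall p_n\, (\exists \vec f_1)(\U \vec f_2)\cdots(\exists \vec f_{k-1})(\U \vec f_k)\,\psi.
\]
The algorithm would run the brute-force procedure of \Cref{thm:dqbf-membership} only for the first $k-1$ Skolem blocks, contributing $k-1$ alternations starting with $\exists$. The remaining obligation is that for all Boolean functions $\vec f_k$ (with the appropriate constraints) and all assignments $s$ to $\vec p$, the formula $\psi$ evaluates to $1$ after substituting each $q^l_i$ by $f^l_i(s(\vec c^{\, l}_i))$. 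Crucially, for any fixed $s$, each $f^k_i$ is evaluated at a single argument $s(\vec c^{\, k}_i)$, so universal quantification over $\vec f_k$ collapses to universal quantification over just $j_k$ Boolean values $\vec v_k \in \{0,1\}^{j_k}$. The leftover check therefore iterates deterministically over all $2^n$ assignments $s$ and, for each, over all $2^{j_k}$ value tuples $\vec v_k$, evaluating $\psi$ on each combination. This is a deterministic exponential-time computation, so no additional alternation is spent, placing $\TRUE(\Sigma_k\text{-}\ADQBF)$ in $\SigmaE{k-1}$.

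For the lower bound, any $\Sigma_{k-1}\text{-}\ADQBF$ instance is already a $\Sigma_k\text{-}\ADQBF$ instance after appending an empty $\U$-block at the innermost position; this is semantically vacuous and yields a trivial logspace reduction $\Sigma_{k-1}\text{-}\ADQBF \leqlogm \Sigma_k\text{-}\ADQBF$. Since $k-1$ is odd, \Cref{thm:odd-k-dqbf-hardness} implies that $\TRUE(\Sigma_{k-1}\text{-}\ADQBF)$ is $\SigmaE{k-1}$-hard, so $\TRUE(\Sigma_k\text{-}\ADQBF)$ inherits the same hardness. The $\Pi_k\text{-}\ADQBF$ case for odd $k$ is handled symmetrically: its innermost block is again $\U$, so the same absorption gives membership in $\PiE{k-1}$, and $\Pi_{k-1}\text{-}\ADQBF \subseteq \Pi_k\text{-}\ADQBF$ combined with \Cref{thm:odd-k-dqbf-hardness} applied to the even value $k-1$ yields $\PiE{k-1}$-hardness. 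The main subtlety is justifying the absorption step rigorously, namely that once $\vec p$ is fixed each function in the innermost $\U$-block is queried at only one argument, so that universal quantification over all Boolean functions reduces to universal quantification over a linear-sized tuple of Boolean values and contributes no new alternation.
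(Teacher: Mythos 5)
Your proposal is correct and rests on exactly the same key observation as the paper's proof: since each function in the innermost $\U$-block is queried at only one argument per assignment to $\vec p$, universal quantification over those Skolem functions collapses to plain universal quantification over Boolean values, and the hardness direction is inherited from \Cref{thm:odd-k-dqbf-hardness} just as in the paper. The only (inessential) difference is packaging: the paper realizes this collapse as a logspace reduction to $\TRUE(\Sigma_{k-1}\text{-}\ADQBF)$ by moving the last block to the front as $\forall q^k_1\cdots\forall q^k_{j_k}$ and then invokes \Cref{thm:dqbf-membership}, whereas you absorb it directly into the deterministic evaluation phase of the alternating algorithm.
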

\begin{proof}
The hardness results follow from \Cref{thm:odd-k-dqbf-hardness}. For inclusion, we prove the case for $\Sigma_k\text{-}\ADQBF$. We give a $\leqlogm$-reduction from $\TRUE(\Sigma_k\text{-}\ADQBF)$ to $\TRUE(\Sigma_{k-1}\text{-}\ADQBF)$. The result then follows from  \Cref{thm:odd-k-dqbf-hardness}. The case for $\Pi_k\text{-}\ADQBF$ is analogous. Consider a formula
\(
\phi \dfn \forall p_1 \cdots \forall p_{n} \, (\exists q^1_1\cdots \exists q^1_{j_1}) \, (\U q^2_1 \cdots \U q^2_{j_2}) \, \dots  (\, \U q^k_1 \cdots \U q^k_{j_k}) \,   \theta
\)
and a constraint $C=(C^1_1, \ldots, C^k_{j_k})$.
We claim that $(\phi,C)$ is equivalent to
\[
\phi' \dfn \forall p_1 \cdots \forall p_{n} \forall q^k_1 \cdots \forall q^k_{j_k} \, (\exists q^1_1\cdots \exists q^1_{j_1}) \, (\U q^2_1 \cdots \U q^2_{j_2}) \, \dots  (\exists q^{k-1}_1 \cdots \exists q^{k-1}_{j_{k-1}}) \, \theta
\]
under the constraint $C'=(C^1_1, \ldots, C^{k-1}_{j_{k-1}})$.

By definition $(\phi,C)$ is true if and only if for all extensions of the tuple of quantified Skolem functions $f^1_1, \ldots, f^{k-1}_{j_{k-1}}$ (some of which are existentially/universally quantified) and for all extensions of the Skolem functions $f^k_i$ it holds that:
\begin{equation}\label{eq:constraint1}
\forall t \in T: t\models \theta, \text{ where } T \dfn \{ s\big(q^1_1 \mapsto f^1_1(\overline{c}^1_1), \ldots, q^k_{j_k} \mapsto f^k_{j_k}(\overline{c}^k_{j_k})\big) \mid s:\{p_1, \ldots, p_n\} \to \{0,1\} \}
\end{equation}
Note that, in fact,  $T$ is the set of all expansions of assignments $s(q^1_1 \mapsto f^1_1(\overline{c}^1_1), \ldots, q^{k-1}_{j_{k-1}} \mapsto f^{k-1}_{j_{k-1}}(\overline{c}^{k-1}_{j_{k-1}}))$,  $s:\{p_1, \ldots, p_n\} \to \{0,1\}$, into domain $\{p_1, \ldots, p_n, q^1_1, \ldots, q^{k}_{j_{k}}\}$.
Thus \eqref{eq:constraint1} can be equivalently written as
\begin{align}\label{eq:constraint2}
\forall t \in T: t\models \theta, \text{ where } T \dfn \{ s\big(q^1_1 \mapsto f^1_1(\overline{c}^1_1)&, \ldots, q^{k-1}_{j_{k-1}} \mapsto f^{k-1}_{j_{k-1}}(\overline{c}^{k-1}_{j_{k-1}})\big) \mid \\
&s:\{p_1, \ldots, p_n, q^k_1, \ldots, q^{k}_{j_{k}}\} \to \{0,1\} \} \notag
\end{align}
Now note that as the constraints and quantifiers for $q^1_1, \ldots, q^{k-1}_{j_{k-1}}$ are exactly the same in $C$ and in $C'$ each extension of the tuple of quantified Skolem functions $f^1_1, \ldots, f^{k-1}_{j_{k-1}}$ in the evaluation of $(\phi,C)$ can be directly interpreted in  $(\phi,C')$, and vice versa. From this together with the equivalence of \eqref{eq:constraint1} and \eqref{eq:constraint2}, we conclude that  $(\phi,C)$ and  $(\phi,C')$ are equivalent.
\end{proof}

Using the translation from $\SOQPLuniq$ to $\ADQBF$ introduced in the proof of \Cref{thm:odd-k-dqbf-hardness}, we obtain the following corollary.

\begin{corollary}\label{thm:so-sentence-to-adqbf-ptime}
For every $\SOQPL$-sentence $\phi$ there is a polynomial time computable $\ADQBF$-instance $(\psi, \tuple C)$ which is true iff $\phi$ is true.
For every $\ESOQPL$-sentence $\phi$ there is a polynomial time computable $\DQBF$-instance $(\psi, \tuple C)$ which is true iff $\phi$ is true.
\end{corollary}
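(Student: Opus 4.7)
The plan is to compose two constructions already developed in the paper: the normalization of \Cref{thm:simple-prenex-so} and the reduction implicit in the proof of \Cref{thm:odd-k-dqbf-hardness}. Given an arbitrary $\SOQPL$-sentence $\phi$, I would first invoke \Cref{thm:simple-prenex-so} to obtain, in polynomial time, an equivalent simple prenex $\SOQPLuniq$-sentence
\[
\phi' \dfn \Game_1 \tuple{f}_1 \dots \Game_\ell \tuple{f}_\ell \forall p_1 \exists q_1 \dots \forall p_n \exists q_n\, \theta,
\]
where $\theta$ is quantifier-free and each function symbol occurs with a single fixed argument tuple $\tuple{a}^i$.

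Next I would apply the translation from the proof of \Cref{thm:odd-k-dqbf-hardness} verbatim. After putting $\theta$ into negation normal form, every occurrence of $f_i(\tuple{a}^i)$ (respectively $\neg f_i(\tuple{a}^i)$) whose arguments involve existentially quantified propositions is guarded by a fresh tuple $\tuple r$ of universally quantified propositions via $\forall \tuple r ((\tuple{a}^i \leftrightarrow \tuple r) \imp (\neg)f_i(\tuple r))$; these extra $\forall$ quantifiers can then be absorbed into a single enlarged universal propositional prefix $\forall p_1 \dots \forall p_N$. The $\ADQBF$-instance $(\psi, \tuple C)$ is then read off: each function symbol $f_i$ becomes a constrained $\exists f_i$ or $\U f_i$ (matching the original quantifier $\Game_i$) with constraint $\{a^i_1,\dots,a^i_{m_i}\}$; each inner $\exists q_i$ becomes a constrained existential with constraint $\{p_1,\dots,p_i\}$; and the matrix $\theta$ is rewritten by replacing each $f_i(\tuple{a}^i)$ with the fresh propositional variable $f_i$. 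Correctness is exactly the correctness argument of \Cref{thm:odd-k-dqbf-hardness}, and the composition of the two polynomial-time constructions remains polynomial.

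For the second statement, the key additional observation is that every sub-step of \Cref{thm:simple-prenex-so} preserves membership in $\ESOQPL$: the simplification only inserts propositional $\forall$s; the prenex movement applies \Cref{thm:quantifier-swap}, which converts an $\exists f$ inside a $\forall x$ into an $\exists f'$ in front, preserving the existentiality of function quantifiers; and the uniqueness splitting duplicates each $\exists f$ into several $\exists f_i$ of the same polarity. Hence if $\phi \in \ESOQPL$, then $\phi' \in \ESOQPLuniq$, and the subsequent $\ADQBF$-reduction introduces only $\exists$-quantified function symbols, so $(\psi, \tuple C)$ is a genuine $\DQBF$-instance.

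The proof is essentially a bookkeeping composition, so I expect no serious obstacle. The one place where I would be mildly attentive is the $\ESOQPL$ preservation claim, which requires a routine case analysis on the three sub-steps of \Cref{thm:simple-prenex-so}, together with checking that the blow-up caused by the guard formulae $\forall \tuple r((\tuple{a}^i\leftrightarrow \tuple r)\imp f_i(\tuple r))$ is linear in the size of $\phi'$ and hence polynomial in $|\phi|$.
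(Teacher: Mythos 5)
Your proposal is correct and follows essentially the same route as the paper: the paper's own justification for this corollary is precisely the composition of \Cref{thm:simple-prenex-so} with the translation constructed in the proof of \Cref{thm:odd-k-dqbf-hardness}, and your additional observation that all normalization steps preserve the existential polarity of proper function quantifiers (so that $\ESOQPL$ lands in $\ESOQPLuniq$, i.e.\ $\DQBF$) is exactly the implicit content of the second claim.
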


\section{Quantified propositional logics with team semantics}\label{sec:team}
The study of propositional logics with team semantics  has so far concentrated on extensions of propositional logics with different dependency notions such as functional dependence, independence and inclusion.
Here we extend the perspective to quantified propositional logics.

\subsection{Basic notions and results}
In the team semantics context it is usual to consider assignments over  finite sets of proposition symbols. We begin by fixing some notation.
Let $D$ be a finite, possibly empty set of proposition symbols. 
A set $X$ of assignments $s\colon D\to \{0,1\}$ is called a \emph{team}. The set $D$ is the \emph{domain} $\Dom{X}$ of $X$.
We denote by $2^D$ the set of \emph{all assignments} $s\colon D\to \{0,1\}$. If $\vec p =(p_1, \ldots ,p_n)$ is a tuple of propositions and $s$ is an assignment, we write $s(\vec p)$ for $\left(s(p_1),\dots,s(p_n)\right)$. If $b\in \{0,1\}$ and $s$ is an assignment with domain $D$, we let $s(q\mapsto b)$ denote the assignment with domain $D\cup\{q\}$ defined as follows: $s(q\mapsto b)(p)=b$ if $p=q$ and  $s(q\mapsto b)(p)=s(p)$ if $p\not=q$.

\medskip

Let $X$ be a team. A function $F:X\to \{\{0\}, \{1\}, \{0,1\}\}$ is called a \emph{supplementing function} of $X$. Supplementing functions are used for giving semantics for existential quantifiers. For a proposition symbol $p$, we define $X[F/p] \dfn \{s(p\mapsto b) \mid s\in X, b\in F(s) \}$. We say that $X[F/p]$ is a \emph{supplemented team}  of $X$ in $p$.

For $A\subseteq \{0,1\}$ we define $X[A/p] \dfn \{s(p\mapsto b) \mid s\in X, b\in A \}$. The team $X[\{0,1\}/p]$ is the \emph{duplicating team} of $X$ in $p$. Duplicating teams are used to give semantics for universal quantifiers.

Let $\Phi$ be a set of proposition symbols. The syntax of \emph{quantified propositional team logic} $\QPTL(\Phi)$ is given by the following grammar:
\[
\phi \ddfn p\mid \neg p \mid (\phi \wedge \phi) \mid (\phi \vee \phi) \mid \negg \phi \mid \forall p \,\phi \mid \exists p\, \phi , \text{ where $p\in\Phi$,}
\]

Its quantifier-free fragment is called \emph{propositional team logic} $\PTL(\Phi)$, similar to the first-order team logic $\TL$ by Väänänen \cite{vaananen07}.
Likewise its $\negg$-free fragment is called \emph{quantified propositional logic} $\QPL(\Phi)$.
The usual \emph{propositional logic} $\PL(\Phi)$ is then just the quantifier-free fragment of $\QPL(\Phi)$.

Let us denote by $\Prop{\phi}$ the set of proposition symbols that occur in $\phi$, and by $\Fr{\phi}$ the set of proposition symbols that occur free in $\phi$. We sometimes write $\phi(p_1, \ldots ,p_n)$ to denote that $\phi$ is a formula whose free proposition symbols are in $\{p_1, \ldots ,p_n\}$. A formula in which no proposition symbol occurs free is called a \emph{sentence}.
We denote by $\modelsPL$ the ordinary satisfaction relation of quantified propositional logic defined via assignments in the  standard way. Next we give team semantics for quantified propositional logic. The semantics for the quantifiers follow the corresponding definitions of first-order team semantics (as quantified propositional logic can be seen as  first-order logic over domain $\{0,1\}$).
\begin{definition}[Lax team semantics]
Let $\Phi$ be a set of atomic propositions and let $X$ be a team. The satisfaction relation $X\models \phi$ for $\phi \in \QPTL(\Phi)$ is defined as follows.
\begin{align*}
X\models p  \quad\Leftrightarrow\quad& \forall s\in X: s(p)=1 \;\text{ if } p \in \Phi\text{.} \\
X\models \neg p \quad\Leftrightarrow\quad& \forall s\in X: s(p)=0 \;\text{ if } p \in \Phi\text{.}\\
X\models (\phi\land\psi) \quad\Leftrightarrow\quad& X\models\phi \text{ and } X\models\psi.\\
X\models (\phi\lor\psi) \quad\Leftrightarrow\quad& Y\models\phi \text{ and }
Z\models\psi,
\text{ for some $Y,Z$ such that $Y\cup Z= X$}.\\
X\models \negg \phi \quad\Leftrightarrow\quad& X \not\models\phi.\\
X\models \exists p\, \phi  \quad\Leftrightarrow\quad& X[F/p]\models \phi \,\text{ for some function $F:X\to \{\{0\}, \{1\},\{0,1\}\}$}. \\
X\models \forall p\, \phi  \quad\Leftrightarrow\quad& X[\{0,1\}/p]\models \phi.
\end{align*}
We say that a sentence $\phi$ is \emph{true} if $\{\emptyset\}\models \phi$, i.e., if the team with just the empty assignment satisfies $\phi$.
\end{definition}

The next proposition shows that the team semantics and the ordinary semantics for $\QPL$-formulae coincide.
\begin{proposition}[Flatness property \cite{vaananen07}]\label{PLflat}
Let $\phi$ be a formula of quantified propositional logic and let $X$ be a propositional team. Then
\(
X\models \phi \;\text{ iff }\; \forall s\in X: s\modelsPL \phi.
\)
\end{proposition}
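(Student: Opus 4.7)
The plan is to proceed by structural induction on the formula $\phi \in \QPL$. Since $\QPL$ is the $\negg$-free fragment of $\QPTL$, the cases to consider are literals, conjunction, disjunction, and the two quantifiers. For the base cases, $\phi \equiv p$ or $\phi \equiv \neg p$ with $p \in \Phi$, both directions are immediate from the defining clauses $X \models p \Leftrightarrow \forall s \in X\colon s(p) = 1$ and $X \models \neg p \Leftrightarrow \forall s \in X\colon s(p) = 0$, which are literal translations of $\forall s \in X\colon s \modelsPL p$ resp.\ $\forall s \in X\colon s \modelsPL \neg p$.

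The conjunction case is a direct rewriting using the induction hypothesis: $X \models \phi \land \psi$ unfolds to $X \models \phi$ and $X \models \psi$, which by induction becomes $\forall s \in X\colon s \modelsPL \phi$ and $\forall s \in X\colon s \modelsPL \psi$, and this is equivalent to $\forall s \in X\colon s \modelsPL \phi \land \psi$. The universal quantifier case is similar: $X \models \forall p\, \phi$ iff $X[\{0,1\}/p] \models \phi$, and by induction this amounts to $s' \modelsPL \phi$ for every $s' \in X[\{0,1\}/p]$, which is precisely $\forall s \in X\colon \forall b \in \{0,1\}\colon s(p \mapsto b) \modelsPL \phi$, i.e., $\forall s \in X\colon s \modelsPL \forall p\, \phi$.

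The two interesting cases are disjunction and existential quantification, and this is where I expect the only real subtlety. For disjunction, from $X \models \phi \lor \psi$ with witnesses $Y \cup Z = X$, the induction hypothesis gives $s \modelsPL \phi$ for all $s \in Y$ and $s \modelsPL \psi$ for all $s \in Z$, so every $s \in X$ satisfies one disjunct. Conversely, given $\forall s \in X\colon s \modelsPL \phi \lor \psi$, I choose the canonical split $Y \dfn \{s \in X \mid s \modelsPL \phi\}$ and $Z \dfn X \setminus Y$; then $Y \cup Z = X$ and by induction $Y \models \phi$ and $Z \models \psi$. For the existential case, if $X \models \exists p\, \phi$ via some supplementing function $F$, then for each $s \in X$ any $b \in F(s)$ witnesses $s(p \mapsto b) \modelsPL \phi$ by induction on $X[F/p]$, so $s \modelsPL \exists p\, \phi$; conversely, picking for each $s \in X$ a witness $b_s \in \{0,1\}$ with $s(p \mapsto b_s) \modelsPL \phi$ and setting $F(s) \dfn \{b_s\}$ yields a supplementing function with $X[F/p] \models \phi$ by induction.

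The main obstacle is really just keeping the bookkeeping straight in the disjunction case: one must realize that the split can be chosen arbitrarily (in particular, based on which disjunct each assignment individually satisfies), which is what makes the lax semantics collapse to pointwise satisfaction on flat formulae. Once this canonical split is identified, the induction goes through without surprises.
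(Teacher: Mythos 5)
Your proof is correct: the structural induction, with the canonical split $Y=\{s\in X\mid s\modelsPL\phi\}$ for disjunction and the singleton-valued supplementing function for the existential quantifier, is exactly the standard flatness argument. The paper itself gives no proof and simply cites V\"a\"an\"anen, so there is nothing to compare against beyond noting that your argument is the expected one and handles the only delicate cases (lax splitting and supplementation) properly.
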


The syntax of \emph{quantified propositional dependence logic} $\QPD(\Phi)$ is obtained by extending the syntax of $\QPL(\Phi)$ by the following grammar rule for each $n\in\mathbb{N}$:
\[
\phi \ddfn \dep{p_1,\dots,p_n,q},\text{ where }p_1,\dots,p_n,q\in\Phi.
\]
The  meaning of the \emph{propositional dependence atom} $\dep{p_1,\dots,p_n,q}$ is that the truth value of the proposition symbol $q$ is functionally determined by the truth values of the proposition symbols $p_1,\dots,p_n$.
The semantics for the atoms is defined as follows:
$X\models \dep{p_1,\dots,p_n,q}$ iff for all $s,t\in X:s(p_1)=t(p_1), \dots, s(p_n)=t(p_n)$ implies $s(q)=t(q)$.

The next well-known result is proved in the same way as the analogous result for first-order dependence logic \cite{vaananen07}.

\begin{proposition}[Downwards closure]\label{dcprop}
Let $\phi$ be a $\QPD$-formula and let $Y\subseteq X$ be propositional teams. Then $X\models \phi$ implies $Y\models \phi$.
\end{proposition}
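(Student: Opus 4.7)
The plan is to prove the proposition by structural induction on the $\QPD$-formula $\phi$. Observe that $\QPD$ is the extension of the $\negg$-free fragment $\QPL$ by dependence atoms, so the only cases to handle are literals $p,\neg p$, dependence atoms, conjunction, disjunction, and the two quantifiers. Assume $Y \subseteq X$ throughout and $X\models\phi$.

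First I would dispatch the straightforward cases. For literals, the definition quantifies over all $s \in X$, so the condition is preserved under taking subsets. For the dependence atom $\dep{p_1,\ldots,p_n,q}$, the semantics is a condition on pairs $s,t \in X$, which also transfers to $Y \subseteq X$. For conjunction, apply the induction hypothesis to both conjuncts. For the universal quantifier, note that $Y \subseteq X$ implies $Y[\{0,1\}/p] \subseteq X[\{0,1\}/p]$, and invoke the induction hypothesis on $\phi$ with these teams.

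The two slightly less trivial cases are the disjunction and the existential quantifier, since both involve an existentially chosen object (a split, respectively a supplementing function) that must be restricted to $Y$. For $\phi = \psi_1 \lor \psi_2$, pick witnesses $X_1, X_2$ with $X_1 \cup X_2 = X$, $X_1 \models \psi_1$ and $X_2 \models \psi_2$, and set $Y_i \dfn X_i \cap Y$; then $Y_1 \cup Y_2 = Y$ and $Y_i \subseteq X_i$, so the induction hypothesis yields $Y_i \models \psi_i$ and hence $Y \models \psi_1 \lor \psi_2$. For $\phi = \exists p\, \psi$, let $F\colon X \to \{\{0\},\{1\},\{0,1\}\}$ be a supplementing function with $X[F/p] \models \psi$, and define $F'$ as the restriction $F \restriction Y$. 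A direct unfolding of the definition shows $Y[F'/p] \subseteq X[F/p]$, so by the induction hypothesis $Y[F'/p] \models \psi$, whence $Y \models \exists p\, \psi$.

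There is no real obstacle here; the only thing to watch is that the construction of witnesses for $Y$ from those of $X$ preserves the set-theoretic inclusion at each step, which is what makes the induction hypothesis applicable. Crucially, no case requires handling $\negg$, which is precisely what would break downward closure and is absent from $\QPD$ by design.
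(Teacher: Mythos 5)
Your proof is correct and is exactly the standard structural induction that the paper invokes by reference (it cites the analogous first-order result in Väänänen's book rather than writing out the cases). The two cases you flag as nontrivial --- restricting the split via $Y_i = X_i \cap Y$ and restricting the supplementing function to $Y$ --- are handled precisely as in that standard argument, so there is nothing to add.
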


In this article we study also a variant of $\QPD$ obtained by replacing dependence atoms by the  so-called \emph{inclusion atoms}.  The  syntax of \emph{quantified propositional inclusion logic} $\QPLInc(\Phi)$ is obtained by extending the syntax of $\QPL(\Phi)$ by the grammar rule
\(
\phi \ddfn (p_1,\ldots,p_n) \subseteq (q_1,\ldots,q_n)
\)
for every $n \geq 0$, where $p_1,\ldots,p_n,q_1,\ldots,q_n \in \Phi$.
The semantics for propositional inclusion atoms is defined as follows:
\(
X\models {\vec p}\subseteq {\vec q}\text{ iff }\forall s\in X \,\exists\, t\in X :s(\vec p)=t(\vec q).
\)

It is easy to check that $\QPLInc$  is not downward closed (cf. Proposition \ref{dcprop}). However, analogously to FO-inclusion-logic \cite{Galliani12}, $\QPLInc$ is closed \wrt unions:

\begin{proposition}[Closure under unions]\label{closureunions}
Let $\phi \in \QPLInc$ and let $ X_i$, for $i\in I$, be teams. Suppose that $X_i\models \phi$ for each $i\in I$. Then $\bigcup_{i\in I} X_i \models \phi$.
\end{proposition}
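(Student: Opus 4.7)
The plan is to proceed by structural induction on $\phi \in \QPLInc$, exploiting the fact that $\QPLInc$ lacks the contradictory negation $\negg$ and that each syntactic construct admits a natural \enquote{pointwise} treatment over the index set $I$. Let $X \dfn \bigcup_{i\in I} X_i$ throughout.

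For the literal cases $p$ and $\neg p$ the semantics is given by universal quantification over the team, so closure under arbitrary unions is immediate. The conjunction case is trivial from the induction hypothesis. For the disjunction $\phi \lor \psi$, I would pick witnessing splits $X_i = Y_i \cup Z_i$ with $Y_i \models \phi$ and $Z_i \models \psi$; then $X = \bigcup_i Y_i \cup \bigcup_i Z_i$ and the induction hypothesis applied to the families $\{Y_i\}_{i \in I}$ and $\{Z_i\}_{i \in I}$ closes this case. For the inclusion atom $\vec p \subseteq \vec q$, I would pick an arbitrary $s \in X$, choose some $i$ with $s \in X_i$, use the assumption to produce $t \in X_i$ with $s(\vec p) = t(\vec q)$, and observe that $t \in X$.

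The two quantifier cases require a small amount of care. For $\forall p\, \phi$, I would use the algebraic identity $X[\{0,1\}/p] = \bigcup_i X_i[\{0,1\}/p]$ and apply the induction hypothesis to the family $\{X_i[\{0,1\}/p]\}_{i \in I}$. The existential case is the only genuinely non-routine step, and is where I expect the main subtlety. For each $i$ we are given a supplementing function $F_i \colon X_i \to \{\{0\}, \{1\}, \{0,1\}\}$ with $X_i[F_i/p] \models \phi$. I would define a single supplementing function $F \colon X \to \{\{0\}, \{1\}, \{0,1\}\}$ for $X$ by
\[
F(s) \dfn \bigcup_{i \in I \,:\, s \in X_i} F_i(s),
\]
which is well-defined because the right-hand side is a nonempty subset of $\{0,1\}$. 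A direct unfolding of the definition of the supplementing operator then yields the identity $X[F/p] = \bigcup_{i\in I} X_i[F_i/p]$: both sides contain exactly those extensions $s(p \mapsto b)$ for which some $i$ satisfies $s \in X_i$ and $b \in F_i(s)$. Invoking the induction hypothesis on the family $\{X_i[F_i/p]\}_{i \in I}$ gives $X[F/p] \models \phi$, hence $X \models \exists p\,\phi$.

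The main obstacle is precisely the existential quantifier step: naively choosing one $F_i$ per element $s$ need not combine into a supplementing function whose supplemented team equals the union of the $X_i[F_i/p]$. Taking the pointwise union of the $F_i(s)$ is the fix, and its correctness crucially relies on the fact that supplementing functions admit the value $\{0,1\}$, so that the union of several such values remains a legal supplementing value. Since no inductive case introduces $\negg$, and no case requires anything beyond closure under unions of the relevant subformulae, the induction goes through for all of $\QPLInc$.
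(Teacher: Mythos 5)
Your proof is correct, and it is the standard argument: the paper itself omits the proof, deferring to the analogous first-order result of Galliani, whose proof is exactly this structural induction. In particular you correctly isolate the one delicate case — the lax existential quantifier — and handle it with the pointwise union of supplementing functions, which is precisely why the value $\{0,1\}$ is admitted and why union closure holds for the lax (but not strict) semantics.
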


\begin{definition}
Let $\LL$ be a propositional logic with team semantics. Recall that a sentence $\phi\in\LL$ is \emph{true} if $\{\emptyset\}\models \phi$. A formula $\phi\in\LL$ is \emph{satisfiable} if there exists a non-empty team $X$ such that $X\models \phi$. A formula $\phi\in\LL$ is \emph{valid} if $X\models \phi$ holds for all teams $X$ such that the proposition symbols in $\fr(\phi)$  are in the domain of $X$. The problems $\TRUE(\LL)$, $\SAT(\LL)$, and  $\VAL(\LL)$ are defined in the obvious way: Given a formula $\phi\in\LL$, decide whether the formula is true, satisfiable or valid, respectively.
\end{definition}

The following results for $\PLInc$ and $\MInc$ are implicitly shown by Hella et~al.~\cite{hkmv15}. They state the results using $\PSPACE$-reductions, but in fact their reductions run in polynomial time.

\begin{proposition}[\cite{hkmv15,virtema14}]\label{SatPLInc}
$\SAT(\PLInc)$ and  $\SAT(\MInc)$ are $\EXPTIME$-complete \wrt $\leqpm$-reductions.
$\VAL(\PD)$ is $\NEXPTIME$-complete \wrt $\leqlogm$-reductions.
\end{proposition}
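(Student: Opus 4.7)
My plan is to treat the three problems separately, leveraging the closure properties of the relevant atoms. For $\SAT(\PLInc)\in\EXPTIME$, I fix $n=\size{\Prop{\phi}}$ and observe that every team of interest is a subset of $2^{\Prop{\phi}}$, hence of size at most $2^n$. By closure under unions (\Cref{closureunions}), the satisfying subteams of $2^{\Prop{\phi}}$ form a union-closed family with a greatest element $X^*$ if any exists. I compute $X^*$ by a greatest-fixed-point iteration starting from $2^{\Prop{\phi}}$ and removing at each step assignments that cannot participate in any satisfying subteam; since team-based model-checking of $\PLInc$ runs in polynomial time in team-plus-formula size, the whole procedure uses deterministic exponential time. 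The argument for $\SAT(\MInc)$ is analogous, with teams of worlds over a canonical Kripke model of exponential size.

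For the matching $\EXPTIME$-hardness I would follow Hella et al.\ and reduce from acceptance by alternating polynomial-space Turing machines. Configurations are encoded as propositional assignments, entire computation trees as teams, and inclusion atoms enforce that for each existential configuration present in the team at least one successor is present, and for each universal configuration all successors are present. Closure under unions is precisely what allows $\PLInc$ to aggregate the exponentially many configurations of an accepting computation into a single witnessing team, and the construction transfers to $\MInc$ with teams of worlds in a branching Kripke structure.

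For $\VAL(\PD)\in\NEXPTIME$, downwards closure (\Cref{dcprop}) collapses validity to the single instance $2^{\Fr{\phi}}\models\phi$, as every team over $\Fr{\phi}$ is a subteam of the full team. Model-checking $\PD$ on an explicitly given team is in $\complClFont{NP}$ in team-plus-formula size by guessing the split at each occurrence of $\lor$ and verifying the remaining atoms; combined with the $2^{\size{\Fr{\phi}}}$-sized full team this yields $\NEXPTIME$. For hardness I would reduce from $\TRUE(\DQBF)$ (\Cref{TDQBF problem}), translating a DQBF with constraint $(C_1,\dots,C_m)$ into a $\PD$-formula over $\vec p,\vec q$ together with auxiliary variables, such that a split of the full team induced by the top-level $\lor$ isolates a candidate Skolem-function graph; dependence atoms $\dep{\vec c_i,q_i}$ force the right functional dependence, and the auxiliary variables ensure that the selected subteam covers every valuation of $\vec p$. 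The main obstacle I anticipate is the $\EXPTIME$-hardness encoding, where inclusion atoms must simulate alternation faithfully without spurious teams arising from unions, and the analogous subtlety in the validity reduction: enforcing totality of the selected Skolem functions with only the downward-closed machinery of $\PD$ requires a careful use of auxiliary variables.
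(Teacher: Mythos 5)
First, note that the paper does not prove this proposition at all: it is imported from Hella et~al.~\cite{hkmv15} and Virtema~\cite{virtema14}, with only the remark that the reductions given there in fact run in polynomial time. Your proposal is therefore being measured against the cited literature rather than against an in-paper argument.

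Your treatment of $\VAL(\PD)$ is essentially the standard one and matches the source: downward closure together with locality collapses validity to the single check $2^{\Fr{\phi}}\models\phi$ (the paper itself reuses exactly this observation in the proof of \Cref{TrueQPD}), and $\complClFont{NP}$ model checking on the exponential-size full team gives the $\NEXPTIME$ upper bound; the hardness reduction from $\TRUE(\DQBF)$ is the route Virtema takes. The genuine gaps are in the $\EXPTIME$ membership claims. Your fixed-point iteration for $\SAT(\PLInc)$ is circular as stated: \enquote{removing assignments that cannot participate in any satisfying subteam} is an oracle call to the very problem you are solving --- by \Cref{closureunions} the set of assignments participating in some satisfying subteam \emph{is} the maximal satisfying subteam, so a single round of your loop already presupposes the answer. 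What actually makes the argument work for $\PLInc$ is a \emph{compositional} computation of the maximal satisfying subteam $M_\psi(Y)$ for each subformula $\psi$: literals and inclusion atoms by a local greatest-fixed-point elimination, disjunction by $M_{\psi_1}(Y)\cup M_{\psi_2}(Y)$, and conjunction by iterating $Y\mapsto M_{\psi_1}(M_{\psi_2}(Y))$ until stabilization; none of these steps appeals to satisfiability itself, and the whole computation is polynomial in $2^{\size{\Prop{\phi}}}\cdot\size{\phi}$. More seriously, the transfer to $\SAT(\MInc)$ \enquote{over a canonical Kripke model of exponential size} is asserted without justification, and this is precisely where the difficulty of \cite{hkmv15} lies: for modal inclusion logic there is no obvious single exponential-size structure to which satisfiability can be restricted, and the $\EXPTIME$ upper bound there is obtained by a genuinely different argument rather than by brute force over a canonical model. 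As written, the $\MInc$ half of the first claim is unproved.
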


The following lemma is a direct consequence of a result of Galliani~et~al.~\cite[Lemma 14]{GallianiHK13}, where an analogous claim  is proven in the first-order setting  over structures with universe size at least $2$. The result follows by the obvious back-and-forth translations between propositional logic and first-order logic where truth of a propositional formula is replaced with satisfaction by the  first-order structure that has universe $\{0,1\}$ and two constants interpreted as $0$ and $1$.

\begin{lemma}[\cite{GallianiHK13}]\label{normal form}
Any formula $\phi$ in $\LL$, where $\LL\in \{\QPD,\QPLInc\}$, is logically equivalent to a polynomial size formula $\Game_1 p_1 \ldots \Game_k p_k\psi$ in $\LL$ where $\psi$ is quantifier-free and $\{\Game_1, \ldots, \Game_k\}\subseteq\{\exists, \forall \}$ for $i=1, \ldots ,n$.
\end{lemma}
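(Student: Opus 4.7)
The plan is to leverage the back-and-forth translation between propositional team logic and first-order team logic over the two-element structure $\calM \dfn (\{0,1\}; 0, 1)$, and then invoke the prenex normal form theorem for first-order team logic established by Galliani et al.

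First, I would define a translation $\phi \mapsto \phi^*$ sending each proposition symbol $p$ to a distinct first-order variable $x_p$, each literal $p$ (resp.\ $\neg p$) to the first-order atom $x_p = 1$ (resp.\ $x_p = 0$), and each dependence atom $\dep{p_1, \ldots, p_n, q}$ (resp.\ inclusion atom $\vec p \subseteq \vec q$) to the corresponding first-order dependence (resp.\ inclusion) atom on the variables $x_{p_i}$, $x_q$. The team connectives as well as the quantifiers $\exists p$ and $\forall p$ translate homomorphically to $\exists x_p$ and $\forall x_p$. A straightforward induction on the structure of $\phi$, using the fact that $\calM$ has constants for $0$ and $1$, establishes that a propositional team $X$ with domain $D$ satisfies $\phi$ iff the corresponding first-order team $X^* \dfn \{ s^* : s \in X \}$, where $s^*(x_p) \dfn s(p)$ for $p \in D$, satisfies $\phi^*$ in $\calM$.

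Next I would apply \cite[Lemma 14]{GallianiHK13} to $\phi^*$: since $\lvert \calM \rvert = 2 \geq 2$, this yields a logically equivalent polynomial-size prenex first-order formula $\Game_1 y_1 \cdots \Game_k y_k \chi^*$ whose matrix $\chi^*$ is quantifier-free and built from the same class of atoms together with $\wedge, \vee$. Renaming each bound variable $y_i$ to a fresh proposition symbol $p_i$ and translating the matrix back to propositional logic — atoms $x_p = 1$ and $x_p = 0$ become the literals $p$ and $\neg p$, any equality $x = y$ between variables that the renaming step may have produced is re-expressed by the constant-size gadget $(p \wedge q) \vee (\neg p \wedge \neg q)$, and dependence and inclusion atoms translate back directly — yields the desired prenex formula $\Game_1 p_1 \cdots \Game_k p_k \chi$ in $\LL$.

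The main obstacle, and the point that needs careful verification, is that the prenexification procedure of Galliani et al.\ does not leave the restricted atomic vocabulary available on the propositional side: in particular, their local quantifier-pushing rules for dependence and inclusion logic rely on locality of team semantics and on the availability of at least two domain elements to simulate fresh witnesses, both of which are provided by $\calM$. Once this is checked, the forward and backward translations each incur only a linear blow-up, so the polynomial size bound from \cite{GallianiHK13} transfers directly, yielding the claimed polynomial-size prenex $\QPD$- resp.\ $\QPLInc$-formula equivalent to $\phi$.
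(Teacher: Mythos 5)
Your proposal is correct and follows essentially the same route the paper takes: the paper likewise justifies the lemma by the back-and-forth translation between propositional team logic and first-order team logic over the two-element structure with constants for $0$ and $1$, and then invokes \cite[Lemma 14]{GallianiHK13}. Your additional remark about verifying that the quantifier-pushing rules only need a domain of size at least two is exactly the hypothesis under which that lemma applies, so nothing further is missing.
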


\subsection{Complexity of quantified propositional logics}

In this section we consider the complexity of quantified propositional dependence and inclusion logic. In the latter case, we reduce the problem to the satisfiability problem of \emph{modal inclusion logic}, $\MInc$, as defined by Hella et~al.~\cite{hkmv15}.

\begin{proposition}\label{TrueQPD}
$\TRUE(\QPD)$ is $\NEXPTIME$-complete \wrt $\leqlogm$-reductions.
\end{proposition}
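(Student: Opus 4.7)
The plan is to establish membership in $\NEXPTIME$ by composing standard translations from $\QPD$ down to $\DQBF$, and to establish $\NEXPTIME$-hardness by a direct logspace reduction from $\TRUE(\DQBF)$, encoding Skolem-function constraints as propositional dependence atoms. For the upper bound I would first apply \Cref{normal form} to transform the input $\QPD$-sentence $\phi$ into an equivalent prenex form $\Game_1 p_1 \ldots \Game_k p_k\,\psi$ in polynomial time, where $\psi$ is a quantifier-free $\QPD$-formula. Then, following the standard Skolemization of dependence logic (cf.~\cite{vaananen07}), each $\exists p_i$ is replaced by an outermost existentially quantified Boolean function over the universal variables preceding it, and every dependence atom $\dep{r_1,\ldots,r_\ell,s}$ occurring in $\psi$ is replaced by $s \leftrightarrow g(r_1,\ldots,r_\ell)$ for a fresh outermost $\exists g$. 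The result is an equivalent $\ESOQPL$-sentence of polynomial size, which by \Cref{thm:so-sentence-to-adqbf-ptime} translates in polynomial time to an equivalent $\DQBF$-instance, and $\TRUE(\DQBF)\in\NEXPTIME$ by \Cref{TDQBF problem}.

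For hardness I would reduce from $\TRUE(\DQBF)$, which is $\NEXPTIME$-hard under $\leqlogm$ by \Cref{TDQBF problem}. Given $(\phi,\vec{C})$ with $\phi = \forall p_1\ldots\forall p_n\,\exists q_1\ldots\exists q_m\,\theta$ and $\vec{C} = (C_1,\ldots,C_m)$, where $\vec{c}_i$ enumerates $C_i\subseteq\{p_1,\ldots,p_n\}$, I would output the $\QPD$-sentence
\[
\phi' \dfn \forall p_1\ldots\forall p_n\,\exists q_1\ldots\exists q_m\,\Big(\bigwedge_{i=1}^m \dep{\vec{c}_i,q_i} \land \theta\Big),
\]
which is plainly computable in logarithmic space. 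For correctness, observe that after the universal prefix the team is $X_0 = 2^{\{p_1,\ldots,p_n\}}$, and by downward closure (\Cref{dcprop}) it suffices to use single-valued supplementing functions for the existentials, yielding a final team $X_m$ of size $2^n$. The conjunct $\dep{\vec{c}_i,q_i}$ then forces the value of $q_i$ in $X_m$ to depend only on $\vec{c}_i$, which is exactly a Skolem function $f_i\colon\{0,1\}^{|C_i|}\to\{0,1\}$ in the sense of \Cref{DQBF-semantics}, while flatness (\Cref{PLflat}) reduces $X_m\models\theta$ to pointwise satisfaction of $\theta$. Hence $\phi'$ is true iff $(\phi,\vec{C})$ is true.

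I do not foresee a serious technical obstacle here: the only real insight is that a propositional dependence atom sitting inside a $\forall^*\exists^*$-prefix encodes a $\DQBF$-style Skolem constraint exactly, and the normal-form lemma together with the $\ESOQPL$-to-$\DQBF$ machinery of \Cref{sec:dqbf} is already strong enough to handle general $\QPD$-sentences in polynomial time.
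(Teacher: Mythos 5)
Your argument is correct, but it takes a genuinely different route from the paper's in both directions. For hardness the paper reduces from $\VAL(\PD)$ (which is $\NEXPTIME$-complete by \Cref{SatPLInc}) via the one-line map $\phi \mapsto \forall\, \vec{p}\, \phi$: downward closure (\Cref{dcprop}) means validity of a $\PD$-formula can be tested on the single full team $2^{\vec{p}}$, which the universal prefix produces. You instead reduce directly from $\TRUE(\DQBF)$, encoding the constraint tuple by dependence atoms; this is exactly the $k=1$ instance of the translation the paper only gives later (display \eqref{eq:trans1} and \Cref{thm:adqbf-to-ptl}), and your correctness argument --- downward closure to make the supplementing functions single-valued, the atoms $\dep{\vec{c}_i,q_i}$ to recover the Skolem functions on a team whose $\{p_1,\dots,p_n\}$-projection is full, flatness for $\theta$ --- is sound. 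What it buys you is self-containedness: you lean only on Peterson et~al.'s classical result (\Cref{TDQBF problem}) rather than on the imported $\NEXPTIME$-hardness of $\VAL(\PD)$, and you make the $\DQBF$--$\QPD$ correspondence explicit at this point rather than in \Cref{sec:team}. For membership the paper simply runs the obvious brute-force nondeterministic exponential-time algorithm that guesses the witnessing subteams for disjunctions and supplementing functions for existentials; your detour through \Cref{normal form}, Skolemization into $\ESOQPL$ and then $\DQBF$ also works but is heavier, and two steps deserve more care than you give them: first, the supplementing function for $\exists p_i$ may depend on \emph{all} previously quantified variables, existential ones included, so \enquote{a function of the preceding universal variables} is obtained only after downward closure has made the earlier witnesses single-valued and you compose; second, replacing each occurrence of $\dep{\vec{r},s}$ in place by $s \leftrightarrow g(\vec{r})$ is sound only if every occurrence receives its own fresh function symbol and one checks, via flatness, that the team splits required by the disjunctions in $\psi$ can be taken to be the sets of assignments classically satisfying the translated disjuncts. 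Alternatively, you could have invoked the paper's own \Cref{thm:gda-to-eso-ptime} with the dependence atom viewed as an $\ESOQPL$-definable generalized atom and avoided the hand-rolled Skolemization altogether.
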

\begin{proof}
We show a reduction from $\VAL(\PD)$ to $\TRUE(\QPD)$. By Proposition \ref{SatPLInc}, the former is $\NEXPTIME$-hard and thus the latter is as well.
Let $\phi$ be a $\PD$-formula and let $\vec{p}$ be the tuple of proposition symbols that occur in $\phi$. Note first that, since $\PD$ is downward closed, it follows that $\phi$ is valid if and only if $2^{\vec{p}}\models \phi$, where $2^{\vec{p}}$ is the team that contains exactly all propositional assignments with domain $\vec{p}$. Thus it follows that the $\PD$-formula $\phi$ is valid if and only if the $\QPD$-formula $\forall\, \vec{p}\, \phi$ is true.

The fact that $\TRUE(\QPD)$ is in $\NEXPTIME$ follows from the obvious brute force algorithm that uses non-determinism to guess the witnessing teams for existential quantifiers and disjunctions.
\end{proof}

\begin{theorem}\label{thm:qplinc}
$\TRUE(\QPLInc)$ is $\EXPTIME$-complete \wrt $\leqpm$-reductions.
\end{theorem}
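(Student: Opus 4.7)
The theorem asserts both an $\EXPTIME$ upper bound and $\EXPTIME$-hardness (even under $\leqlogm$) for $\TRUE(\QPLInc)$. My plan is to attack the two directions separately, establishing hardness by a direct reduction from $\SAT(\PLInc)$ (which is $\EXPTIME$-hard by \Cref{SatPLInc}) and membership by a polynomial-time reduction to $\SAT(\MInc)$ (which by \Cref{SatPLInc} lies in $\EXPTIME$).

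For the \textbf{lower bound}, given a $\PLInc$-formula $\phi$ whose propositions are $p_1,\ldots,p_n$, I would map it to the $\QPLInc$-sentence $\phi^{\ast}\dfn \exists p_1\cdots\exists p_n\,\phi$. The key observation is that, under lax team semantics, starting from the initial team $\{\emptyset\}$ and successively applying existential quantifiers $\exists p_1,\ldots,\exists p_n$, one can realize \emph{every} non-empty team $Y\subseteq 2^{\{p_1,\ldots,p_n\}}$ by a suitable choice of supplementing functions. Indeed, if $Y_i$ denotes the projection of $Y$ onto $\{p_1,\ldots,p_i\}$, then inductively choosing $F_i(s)\dfn\{b\in\{0,1\}\mid s(p_i\mapsto b)\in Y_i\}$ (which is non-empty whenever $s\in Y_{i-1}$) produces exactly $Y_i$ at stage $i$. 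Hence $\phi^{\ast}$ is true iff some non-empty team satisfies $\phi$, which is precisely satisfiability of $\phi$. The map is clearly computable in logspace.

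For the \textbf{upper bound}, I would first use \Cref{normal form} to assume WLOG that the input sentence is in prenex form $\Game_1 p_1\cdots\Game_k p_k\,\psi$ with $\psi$ quantifier-free. The plan is then to build an $\MInc$-formula $\phi^{\circ}$, of size polynomial in $|\phi|$, that is satisfiable iff the given sentence is true. The underlying idea is to encode the team-evolution through the quantifier prefix into modal depth of a Kripke model: depth $i$ carries the team produced after $\Game_i p_i$, with each quantifier translated to $\Diamond$ (for $\exists$) or $\Box$ (for $\forall$), and with a dedicated proposition $p_i$ recording the value chosen at that level. The quantifier-free body $\psi$ is translated at depth $k$, its inclusion atoms becoming inclusion atoms of the image team at the leaves. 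The $\QPLInc$ semantics of $\forall p$---duplicating every assignment---is reproduced by forcing every world at depth $i-1$ to possess exactly two successors, one validating $p_i$ and one falsifying it; $\exists p$ with its supplementing function is then captured by the modal-inclusion semantics of $\Diamond$, which non-deterministically chooses a non-empty subset of successors for each world.

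The main obstacle is the faithfulness of this encoding: I have to ensure that arbitrary satisfying Kripke models for $\phi^{\circ}$ induce a branching structure that genuinely mirrors the duplicating and supplementing operations of $\QPLInc$, and that inclusion atoms at depth $k$ are evaluated over the correct leaf-team. Both conditions can be imposed by polynomial-size modal subformulae asserting, at each level, the existence of the two named $p_i$-successors and the absence of irrelevant ones; the translation is then a straightforward induction on quantifier depth, and correctness follows by comparing the step-by-step team-transformations of $\QPLInc$ with the modal team-transformations of $\MInc$.
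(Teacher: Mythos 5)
Your proposal is correct and follows essentially the same route as the paper: hardness via the reduction $\phi\mapsto\exists\vec{p}\,\phi$ from $\SAT(\PLInc)$, and membership via \Cref{normal form} followed by a polynomial-time translation into $\MInc$ that forces a complete binary assignment tree (the paper's $\tree{p,n}$, built from $\branch{p_i}$ and $\store{p_j}$) and replaces $\exists/\forall$ by $\Diamond/\Box$. The only cosmetic difference is that you propose also excluding "irrelevant" successors, which the paper does not need since extra successors only duplicate assignments already present and hence do not change the induced team.
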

\begin{proof}
We give a $\leqpm$-reduction from $\SAT(\PLInc)$ to $\TRUE(\QPLInc)$. Since, by Proposition \ref{SatPLInc}, $\SAT(\PLInc)$ is $\EXPTIME$-hard under $\leqpm$-reductions, it follows that  $\TRUE(\QPLInc)$ is as well.
Let $\phi$ be a formula of $\PLInc$ and let $\vec{p}$ be the tuple of proposition symbols that occur in $\phi$. Clearly there exists a nonempty propositional team $X$ such that $X\models \phi$ if and only if $\{\emptyset\}\models \exists \vec{p}\, \phi$.

We will next show that $\TRUE(\QPLInc)$ is in $\EXPTIME$. We do this via a polynomial time translation $\phi\mapsto \phi^*$ from $\QPLInc$ to $\MInc$.
The translation is designed such that $\phi$ is true if and only if $\phi^*$ is satisfied by a non-empty team in a Kripke structure. Since, by Proposition \ref{SatPLInc}, $\SAT(\MInc)$ is in $\EXPTIME$, it follows that $\TRUE(\QPLInc)$ is as well. In our construction, the idea is that points in a $Kripke$ model will correspond to propositional assignments, and existential and universal quantifiers are simulated by diamonds and boxes, respectively.

First we will enforce a binary (assignment) tree in our structure. Branching in the tree will correspond to quantification of proposition variables.
The binary tree is forced in the standard way by modal formulae:
The formula $\branch{p_i}\dfn \Diamond p_i\land\Diamond\lnot p_i$ forces that there are $\ge 2$ successor states which disagree on a proposition $p_i$.
The formula $\store{p_i}\dfn (p_i\land\Box p_i)\lor(\lnot p_i\land\Box\lnot p_i)$ is used to propagate chosen values for $p_i$ to successors in the tree.
Now define
\begin{align*}
\tree{p,n}\dfn \branch{p_1}\land\bigwedge_{i=1}^{n-1}\Box^i\Bigl(\branch{p_{i+1}}\land\bigwedge_{j=1}^i\store{p_j}\Bigr),
\end{align*}
where $\Box^i\phi\dfn\overbrace{\Box\cdots\Box}^{i\text{ many}}\phi$. The formula  $\tree{p,n}$ forces a complete binary assignment tree of depth $n$ for proposition symbols $p_1,\dots,p_n$. Notice that $\tree{p,n}$ is an $\ML$-formula and hence has the flatness property, analogously to Proposition \ref{PLflat} \cite{vaananen_modal_2008}. When $\phi$ is a $\QPLInc$-formula, we denote by $\phi'$ the $\MInc$ formula that is obtained from  $\phi$ by substituting each existential quantifier $\exists p$ by $\Diamond$ and each universal quantifier $\forall p$ by $\Box$.

We are now ready to state our reduction. Let $\phi$ be an arbitrary $\QPLInc$-formula in the normal form of Lemma \ref{normal form}. \Wloss we may assume that $\phi = \Game_1 p_1\dots \Game_n p_n \psi$, where $\{\Game_1,\ldots,\Game_n\}\subseteq\{ \exists, \forall\}$ and $\psi$ is quantifier-free. Define
\(
\phi^* \dfn  \tree{p,n} \wedge \phi'.
\)
It is straightforward to check that, indeed, $\phi$ is true if and only if $\phi^*$ is satisfiable.
\end{proof}

\subsection{Propositional team logic and \texorpdfstring{$\ADQBF$}{ADQBF}}
In \cite{piil_complexity_2015} it was established that the validity and satisfiability problem of  $\PTL$ extended with either inclusion  or independence atom is complete for $\AEXPPOLY$. In the extended version of the paper it is shown that, in fact, this holds this holds already for $\PTL$. Here we generalize this result by establishing connections between fragments of team-based logics and $\ADQBF$.

First observe that sentences of $\ADQBF$ can be equivalently interpreted as sentences of $\QPTL$ extended with dependence atoms, denoted by  $\QPTL(\depop{})$. This translation is analogous to the translation from $\SO$ to first-order team logic  (see \cite{kontinen2011team,nurmi09}).
Let
\[
\phi \dfn \forall p_1 \cdots \forall p_{n} \, (\exists q^1_1\cdots \exists q^1_{j_1}) \, (\U q^2_1 \cdots \U q^2_{j_2}) \,  (\exists q^3_1 \cdots \exists q^3_{j_3}) \dots  (\, Q q^k_1 \cdots Q q^k_{j_k}) \,   \theta,
\]
be a simple alternating qBf with constraints $(C^1_1,\dots,C^k_{j_k})$. Recall that for a set of variables $C$, we denote by $\vec{c}$ the canonically ordered tuple consisting of the variables in $C$. Let $\phi^*$ denote the following $\QPTL(\depop{})$-sentence:
\begin{align}
\forall p_1 \cdots \forall p_{n}& \, (\exists q^1_1\cdots \exists q^1_{j_1}) \, (\U q^2_1 \cdots \U q^2_{j_2}) \,  (\exists q^3_1 \cdots \exists q^3_{j_3}) \dots  (\, Q q^k_1 \cdots Q q^k_{j_k})  \label{eq:trans1}\\
&\negg\Bigg[  \negg (p\land \neg p) \land \bigwedge_{\overset{1\leq i \leq k}{\overset{i \text{ is even}}{1 \leq l \leq j_i}}} \dep{\overline{c}^i_l, y^i_l} \Bigg] \lor \Bigg[ \Big( \bigwedge_{\overset{1\leq i \leq k}{\overset{i \text{ is odd}}{1 \leq l \leq j_i}}} \dep{\overline{c}^i_l, y^i_l}\Big) \land \theta \Bigg] \notag
\end{align}
Above  the quantifier $\U q$  is treated  as a shorthand for the expression $\negg \exists q \negg$.\footnote{The syntax is the same as in \Cref{def:semantics-adqbf}. However, for $\ADQBF$, $\U$ refers to the universal quantification of Skolem functions, while in team semantics, it refers to the universal quantification of supplementing functions. These notions are not the same, but easily translatable into each other, as we show.} It is straightforward to check that $\phi$ is true under the constraint $(C^1_1,\dots,C^k_{j_k})$ if and only if $\phi^*$ is true.
Thus we obtain fragments of  $\QPTL(\depop{})$ that express complete problems for levels of the exponential hierarchy, see Theorem \ref{thm:odd-k-dqbf-hardness}.
For $k=1$, we obtain a translation from $\DQBF$ to $\QPD$.
It is noteworthy that, in fact, by the above translation we obtain a close connection between the classes $\SigmaE{k}$ and $\PiE{k}$, and the fragment of $\QPTL(\depop{})$ of sentences with $\leq k$ nested $\negg$s  ($\degg{\phi}$); formally defined as follows:
\begin{align*}
& \degg{\forall p\phi} \dfn  \degg{\exists p \phi} \dfn  \degg{\phi}, \,  \degg{\phi\lor \psi}\dfn \degg{\phi\land \psi}\dfn \max\{\degg{\phi},  \degg{\psi}\},\\
&\degg{\neg \phi} \dfn \degg{\phi}, \, \degg{\negg \phi} \dfn \degg{\phi}+1, \, \degg{\dep{\vec{p},q}}\dfn \degg{p} \dfn 0.
\end{align*}
Note that the relationship given by this translation is not strict. It is easy to show, by a brute-force algorithm, that $\TRUE(\LL)$ is in $\SigmaE{k+1}$, where $\LL$ is the fragment of $\QPTL(\depop{})$ with formulae with $\degg{\phi}\leq k$. Moreover, from the above translation together with Theorem \ref{thm:odd-k-dqbf-hardness} we obtain hardness for $\SigmaE{k-2}$.

\begin{proposition}\label{thm:adqbf-to-ptl}
Every $\ADQBF$-instance \emph{(}$\DQBF$-instance\emph{)} $(\psi, \tuple C)$ can be translated in polynomial time to a $\QPTL(\depop{})$-sentence \emph{(}$\QPD$-sentence\emph{)} $\phi$ \suchthat $(\psi,\tuple C)$ is true iff $\phi$ is true.
\end{proposition}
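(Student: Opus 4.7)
The plan is to take the translation $\psi \mapsto \phi$ already written out in equation~(\ref{eq:trans1}) immediately above the statement as the reduction itself. The polynomial-time (indeed logarithmic-space) bound is immediate: the quantifier prefix and the matrix $\theta$ are copied verbatim, and one dependence atom $\dep{\vec c^i_l, q^i_l}$ per constraint $C^i_l$ is appended to either the left or the right disjunct, giving $|\phi| = \bigO{|\psi| + |\tuple C|}$.

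For the correctness I would argue by structural unfolding of the team semantics, mirroring the quantifier blocks of $\psi$. The anchor is the classical correspondence: for any team $X$ over $\{p_1,\dots,p_n\}$ and any downward-closed $\chi$,
\[
X \models \exists q_1 \cdots \exists q_m \Big(\bigwedge_{l=1}^m \dep{\vec c_l, q_l} \land \chi\Big)
\]
iff there exist Boolean functions $f_l\colon \{0,1\}^{|\vec c_l|}\to\{0,1\}$ for which the team $\{\,s(q_1 \mapsto f_1(s(\vec c_1)),\dots,q_m \mapsto f_m(s(\vec c_m))) \mid s\in X\,\}$ satisfies $\chi$. Combined with the observation that the leading block $\forall p_1\cdots\forall p_n$ sends the initial team $\{\emptyset\}$ to the full assignment team $2^{\{p_1,\dots,p_n\}}$, this matches the existential Skolem-function semantics of \Cref{DQBF-semantics}. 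In the $\DQBF$ sub-case there are no $\U$ blocks, so the outer $\negg$ and left disjunct of (\ref{eq:trans1}) are superfluous, and $\phi$ simplifies to the $\QPD$-sentence $\forall\vec p\,\exists\vec q\,\big(\bigwedge_l \dep{\vec c_l, q_l} \land \theta\big)$; the anchor case then yields equivalence with $(\psi,\tuple C)$ directly.

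The main obstacle in the general $\ADQBF$ case is the treatment of the universal Skolem quantifier $\U$, encoded via $\U q \equiv \negg \exists q \negg$. Under lax team semantics the inner $\exists q$ ranges over \emph{all} supplementing functions $F$, not only over Boolean ones of the shape imposed by the constraint $\vec c^i_l$. My plan is to show that the two-disjunct shape of the matrix in (\ref{eq:trans1}) exactly compensates for this. The conjunct $\bigwedge_{i\text{ even}} \dep{\vec c^i_l, q^i_l}$ placed inside the outer $\negg$ acts as a filter on $F$: any supplementing functions that violate some even-block dependence atom can be discharged by splitting the current team as $Y = X[F/q]$, $Z = \emptyset$, so that $Y \models \negg[\bigwedge\dep{\cdot}]$ holds and $Z$ satisfies the right disjunct vacuously, making the matrix trivially true; any supplementing functions that satisfy all even-block dependence atoms necessarily correspond to Boolean functions of the prescribed arguments, forcing the split $Y=\emptyset$, $Z=X[F/q]$, whereupon the right disjunct $\bigwedge_{i\text{ odd}} \dep{\cdot}\land \theta$ expresses both the subsequent existential Skolem-function quantification and the original propositional matrix. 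Iterating this dichotomy block by block, together with the downward-closure of dependence atoms that forbids further splitting in the $Y\models\bigwedge\dep{\cdot}$ direction, yields the required equivalence $(\psi,\tuple C)\text{ true} \iff \phi\text{ true}$, and thereby both halves of the proposition.
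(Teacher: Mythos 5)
Your proposal is correct and follows essentially the same route as the paper: Proposition~\ref{thm:adqbf-to-ptl} is proved there simply by exhibiting the translation~\eqref{eq:trans1} (with the $\QPD$ case obtained by dropping the $\negg$ and the left disjunct when $k=1$) and asserting that correctness is "straightforward to check" by analogy with the known $\SO$-to-team-logic translations. You use the identical construction and in fact supply more of the correctness argument (the absorption/downward-closure dichotomy for the $\U$ blocks) than the paper itself does.
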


Using the ideas of \cite{piil_complexity_2015}, we may eliminate the quantifiers in \eqref{eq:trans1} and relate the truth of $\phi$ and $\phi^*$ with the satisfiability of a certain formula of $\PTL$ extended with dependence atoms, denoted by $\PTL(\depop{})$. Define a shorthand $\max(p_1,\dots,p_n) \dfn  \negg \bigvee_{1\leq i \leq n} \dep{p_i}$. It was noted in  \cite{piil_complexity_2015} that $X$ satisfies $\max(p_1,\dots,p_n)$ if and only if for each assignment $s$ with domain $\{p_1,\dots,p_n\}$ there is an expansion $s'$ of $s$ in $X$.
Let $\phi'\dfn \max(p_1, \dots, p_{n},q^1_1,\dots,q^k_{k})\wedge \psi$ denote the $\PTL(\depop{})$-formula,
where $\psi$ is obtained by using the following recursive translation to eliminate every quantifier from $\phi$ starting from left to right.
Each quantifier of type $\exists q^i_j$ is recursively translated as  $\Big(\dep{\vec{c}^i_j,q^i_j} \lor (\dep{\vec{c}^i_j,q^i_j} \land \psi)\Big)$.  Each quantifier of type $\U q^i_j$ is recursively translated as $\negg\Big(\dep{\vec{c}^i_j,q^i_j} \lor (\dep{\vec{c}^i_j,q^i_j} \land \negg\psi)\Big)$. For the right most quantifier in the recursive translation, we set $\psi\dfn \theta$. It is quite straightforward to prove (cf.  \cite[Theorem 7]{piil_complexity_2015}) that $\phi$ is true under the constraint $(C^1_1,\dots,C^k_{j_k})$ if and only if $\phi'$ is satisfiable.
Here the connection between the classes $\SigmaE{k}$ and $\PiE{k}$, and the fragment of $\PTL(\depop{})$ of sentences with $\degg{\phi}\leq k$ is even more tighter than above. We obtain $\SigmaE{k}$-hardness for $\SAT(\LL)$, where $\LL$ is the fragment of $\PTL(\depop{})$ with formulae with $\degg{\phi}\leq k$. Note that using the above recursive translation and by setting $\max(p_1, \dots, p_{n},q^1_1,\dots,q^k_{k})\dfn \forall p_1 \cdots \forall p_{n}\forall q^1_1\dots\forall q^k_{k}$, we obtain $\SigmaE{k}$-hardness for $\TRUE(\LL)$, where $\LL$ is the fragment of $\QPTL(\depop{})$ with formulae with $\degg{\phi}\leq k$.

Finally note that dependence atoms of type $\dep{p_1,\dots,p_n,q}$ can be expressed via unary atoms as follows
\[
\negg \big( (r\lor \neg r) \lor \bigwedge_{1\leq n} \dep{p_i} \land \negg \dep{q} \big),
\]
while unary atoms $\dep{p}$ can be rewritten as  $\negg( \negg p \land \negg\neg p)$. As a summary, we obtain the following results.

\begin{proposition}\label{thm:adqbf-to-ptl2}
Every $\ADQBF$-instance $(\psi, \tuple C)$ can be translated in polynomial time to a $\QPTL$-sentence $\phi$ \suchthat $(\psi,\tuple C)$ is true iff $\phi$ is true.
\end{proposition}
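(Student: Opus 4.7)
The plan is to chain together the translations and eliminations that have already been outlined in the preceding text, so as to reach a pure $\QPTL$-sentence with no dependence atoms at all.

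First, I would apply \Cref{thm:adqbf-to-ptl} to transform the input $\ADQBF$-instance $(\psi, \tuple C)$ in polynomial time into an equivalent $\QPTL(\depop{})$-sentence $\phi^*$. The alternating Skolem-function quantification and the constraints are handled exactly as in the translation $\phi \mapsto \phi^*$ given in \eqref{eq:trans1}, so afterwards it suffices to argue that the dependence atoms appearing in $\phi^*$ can themselves be defined inside plain $\QPTL$ with only polynomial blow-up.

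Second, I would rewrite each $n$-ary dependence atom $\dep{p_1,\dots,p_n,q}$ occurring in $\phi^*$ using the equivalence noted at the end of the subsection, namely
\[
\dep{p_1,\dots,p_n,q} \;\equiv\; \negg\Bigl( (r \lor \neg r) \,\lor\, \bigl(\textstyle\bigwedge_{i=1}^n \dep{p_i} \,\land\, \negg \dep{q}\bigr)\Bigr),
\]
with a fresh proposition symbol $r$ for each occurrence. To make sure that $r$ lies in the domain of every team reached during evaluation, I would bind all such fresh symbols by prepending a block $\forall r_1 \cdots \forall r_m$ to the quantifier prefix of $\phi^*$; since the original sentence is closed and team-semantically invariant under duplication of irrelevant columns, this causes only a polynomial increase in size and does not affect truth. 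After this step the formula contains only unary dependence atoms, which I then eliminate by the further equivalence $\dep{p} \equiv \negg(\negg p \land \negg \neg p)$. The result $\phi$ is a closed $\QPTL$-formula constructed from $(\psi, \tuple C)$ in deterministic polynomial (indeed logarithmic) space, and it is equivalent to $\phi^*$, hence true iff $(\psi, \tuple C)$ is true.

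The main obstacle is the correctness check of the two dependence-atom eliminations under lax team semantics. For the $n$-ary case one must verify that the split-disjunction in the inner formula really witnesses exactly those subteams on which all $p_i$ are constant but $q$ is not, which is precisely what characterises failure of $\dep{p_1,\dots,p_n,q}$; one then uses the outer $\negg$ together with the tautological \enquote{true} disjunct $(r \lor \neg r)$ to express \enquote{no such bad subteam exists}. For the unary case the verification is a short calculation from the semantics of $\negg$, $\land$, and literals. Everything else (preservation of closedness, polynomial size, and the logarithmic-space computability of the syntactic rewriting) is routine once these two equivalences are established.
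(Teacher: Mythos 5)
Your proposal is correct and follows essentially the same route as the paper: apply the translation of \Cref{thm:adqbf-to-ptl} (formula \eqref{eq:trans1}) to obtain a $\QPTL(\depop{})$-sentence, then eliminate $n$-ary dependence atoms via the split-disjunction formula with the tautological disjunct $(r\lor\neg r)$, and finally rewrite the unary atoms $\dep{p}$ as $\negg(\negg p\land\negg\neg p)$. The only addition is your explicit handling of the fresh symbols $r$ by universal quantification, which the paper leaves implicit but which is harmless and correct.
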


\begin{proposition}\label{thm:ptl-hardness-levels}
For a logic $\LL$ let $\LL_k$ denote the fragment of $\LL$ with formulae $\phi$ for which $\degg{\phi}\leq k$. Then $\TRUE(\QPTL_k(\depop{}))$ and $\SAT(\PTL_k(\depop{}))$ are in $\SigmaE{k+1}$ and $\SigmaE{k}$-hard \wrt $\leqpm$-reductions. Moreover  $\TRUE(\QPTL_k)$ and $\SAT(\PTL_k)$ are in $\SigmaE{k+1}$ and $\SigmaE{k-2}$-hard \wrt $\leqpm$-reductions.
\end{proposition}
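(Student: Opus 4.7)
The plan is to prove the four parts of the proposition — upper bound $\SigmaE{k+1}$ for all problems, $\SigmaE{k}$-hardness for the $\depop{}$-variants, and $\SigmaE{k-2}$-hardness for the $\depop{}$-free variants — by combining the constructions already developed in this section with a careful count of $\negg$-depths.

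For the upper bound, I would describe a brute-force alternating Turing machine. Starting existentially, the machine processes $\phi \in \QPTL_k(\depop{})$ top-down: it guesses supplementing functions for $\exists$, splits $Y \cup Z = X$ for $\lor$, and branches universally on $\forall$ by including both values of the quantified variable. At every occurrence of $\negg$ it flips its alternation state. When the quantifier-free leaf is reached the current team has size at most single-exponential, so dependence atoms and propositional literals can be evaluated in deterministic exponential time. The total number of alternations is bounded by $\degg{\phi}+1 \leq k+1$, yielding $\SigmaE{k+1}$-membership; the analogous bound for $\SAT(\PTL_k(\depop{}))$, $\TRUE(\QPTL_k)$, and $\SAT(\PTL_k)$ is immediate.

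For $\SigmaE{k}$-hardness of $\TRUE(\QPTL_k(\depop{}))$ and $\SAT(\PTL_k(\depop{}))$, I would invoke Theorem~\ref{thm:odd-k-dqbf-hardness}: $\TRUE(\Sigma_k\text{-}\ADQBF)$ is $\SigmaE{k}$-complete for odd $k$, and for even $k$ the stronger $\SigmaE{k+1}$-completeness of $\Sigma_{k+1}\text{-}\ADQBF$ together with $\SigmaE{k} \subseteq \SigmaE{k+1}$ suffices. The two translations spelled out in the preceding text then reduce these problems to $\SAT(\PTL_k(\depop{}))$ and $\TRUE(\QPTL_k(\depop{}))$ respectively: for $\PTL(\depop{})$, the recursive translation replaces every inner quantifier by a dep-atom-constrained disjunction, inserting exactly one $\negg$ per $\U$-block, while the outer universal quantifiers are handled by the $\max(\cdot)$ gadget; a direct induction over the prefix yields $\degg{\phi'} \leq k$. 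For $\QPTL(\depop{})$ the same recipe is used, but $\max(\cdot)$ is replaced by a leading $\forall$-block, which contributes nothing to $\degg$. Both reductions are polynomial and $\leqpm$-faithful, so $\SigmaE{k}$-hardness transfers.

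For $\SigmaE{k-2}$-hardness of $\TRUE(\QPTL_k)$ and $\SAT(\PTL_k)$ I would compose the above reductions with the explicit dep-atom elimination from the text: $\dep{p} \mapsto \negg(\negg p \wedge \negg \neg p)$ and $\dep{\vec p,q} \mapsto \negg((r \vee \neg r) \vee (\bigwedge_i \dep{p_i} \wedge \negg \dep{q}))$. Using the team-semantic law $\negg\negg \phi \equiv \phi$, the subformula $\negg \dep{q}$ simplifies to the depth-$1$ formula $\negg q \wedge \negg\neg q$, so that each elimination inflates $\degg$ by at most $2$. Applying the substitution to the $(Q)\PTL_k(\depop{})$-instances produced in the previous step yields $(Q)\PTL_{k+2}$-instances with the same truth/satisfiability behaviour, so $\SigmaE{k}$-hardness of the former gives $\SigmaE{k-2}$-hardness of the latter.

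The main obstacle is the $\degg$-bookkeeping across the composition of three translations — the $\ADQBF$-to-$\PTL(\depop{})$ recursive encoding, the dep-atom elimination, and the $\negg\negg$ absorption used for $\negg\dep{q}$ — each of which contributes $\negg$'s in distinct positions. One has to verify that the outer $\negg$ from $\dep{\vec p,q} \mapsto \negg(\ldots)$ does not cause the depth to exceed the advertised increment of $2$, and that the $\negg\negg$-cancellation is indeed available (syntactically) in every position in which it is used.
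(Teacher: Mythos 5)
Your proposal follows the same route as the paper, which establishes this proposition not by a self-contained proof but by the discussion immediately preceding it: a brute-force alternating algorithm for the $\SigmaE{k+1}$ upper bounds, the recursive quantifier-elimination translation from $\Sigma_k$-$\ADQBF$ (with the $\max(\cdot)$ gadget for $\SAT(\PTL_k(\depop{}))$ and a leading $\forall$-block for $\TRUE(\QPTL_k(\depop{}))$) for the $\SigmaE{k}$-hardness claims, and the elimination of dependence atoms for the $\negg$-only fragments. Your decomposition, your sources of hardness, and your index shifts all match the paper's.

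The one place where your accounting does not go through as written is the claimed increment of $2$ for eliminating $\dep{p_1,\dots,p_n,q}$. You correctly note that $\negg\dep{q}$ collapses to the depth-$1$ formula $\negg q \land \negg\neg q$, but the conjuncts $\dep{p_i}$ sit at depth $1$ under the outer $\negg$ of the elimination formula, and each of them must itself be replaced by $\negg(\negg p_i \land \negg\neg p_i)$, which has $\negg$-depth $2$; no $\negg\negg$-cancellation is available there because the two negations are not adjacent. The fully eliminated atom therefore has $\negg$-depth $3$, not $2$, so the straightforward composition only yields $\SigmaE{k-3}$-hardness for $\TRUE(\QPTL_k)$ and $\SAT(\PTL_k)$, unless one supplies a depth-$2$ team-semantic definition of the $n$-ary dependence atom (equivalently, a depth-$2$ expression of constancy). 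The paper's own text asserts the $\SigmaE{k-2}$ bound from the same elimination formulas without further justification, so this is a gap you share with, rather than introduce into, the argument; but since you explicitly single out the $\degg$-bookkeeping as the step to be verified, this is precisely the verification that fails as stated.
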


\section{Generalized dependence atoms}\label{sec:atoms}

In this section we study extensions of $\QPL$ and $\QPTL$ by the so-called generalized dependence atoms. In the context of first-order dependence logic, generalized atoms were introduced by Kuusisto \cite{Kuusisto2015}.

An \emph{$n$-ary generalized dependence atom} ($n$-GDA) is a set $G$ of $n$-ary relations over the Boolean domain $\{0,1\}$. For each $n$-GDA $G$, we introduce an atomic expression $A_G(p_1,\dots,p_n)$ that takes $n$ proposition symbols as parameters. Let $X$ be a team with $\{p_1, \ldots ,p_n\}\subseteq \Dom{X}$. The satisfaction relation $X\models  A_G(p_1,\dots,p_n)$ is given as follows:
\[
X\models A_G(p_1, \ldots ,p_n)  \Leftrightarrow  \rel{X,(p_1,\dots,p_n)}\in G,
\]
where $\rel{X,(p_1, \dots,p_n)} \dfn \{\big(s(p_1),\dots, s(p_n) \big) \mid s\in X  \}$.
We say that an $\SOQPL$-formula $\phi(f)$ with free function variable $f$ \emph{defines} $G$ if
\[
X\models A_G(p_1, \ldots ,p_n) \Leftrightarrow \emptyset^f_{\chi(X, (p_1, \dots,p_n))} \models \phi(f),
\]
where $\chi(X, (p_1, \dots,p_n))$ is the characteristic function of $\rel{X,(p_1, \dots,p_n)}$, and $\emptyset^f_{\chi(X, (p_1, \dots,p_n))}$ is the assignment that maps $f$ to $\chi(X, (p_1, \dots,p_n))$.
Moreover, we call $G$ $\SOQPL$-\emph{definable} ($\ESOQPL$-\emph{definable}) if there exists an $\SOQPL$-formula ($\ESOQPL$-formula) $\phi(f)$ that defines $G$. For a set $\calG$ of GDAs, let us denote by $\QPL(\calG)$ ($\QPTL(\calG)$) the logic obtained by extending $\QPL$  ($\QPTL$)  with the atoms in $\calG$.
For a set $\calG=\{G_i \ |\  i\in \mathbb{N}\}$ of atoms and respective defining sentences $\phi_i$, the set  $\calG$ is said to be \emph{polynomial time translatable} if the function $1^n \mapsto \langle \phi_n\rangle $, where  $\langle \phi_n\rangle$ is the binary encoding of $\phi_n$, is polynomial-time computable. The following theorem relates the logics  $\QPL(\calG)$ and $\QPTL(\calG)$ to   $\ESOQPL$ and $\SOQPL$, respectively.

As an example, we consider the dependence atom introduced in \Cref{sec:team}.

\begin{example}
The $n$-ary dependence atom $\dep{p_1,\ldots,p_{n-1},q}$ corresponds to the $n$-GDA that is defined as
\[
\Set{ R \subseteq \{0,1\}^n | \forall (s_1,\ldots,s_n) ,(t_1,\ldots,t_n) \in R : \bigwedge_{i=1}^{n-1} s_i = t_i \text{ implies } s_n = t_n }
\]

It is definable (even without second-order quantifiers) by the $\ESOQPL$-formula
\[
\phi(f) \dfn \forall x_1 \dots \forall x_n \forall y_1\dots \forall y_n \left(f(x_1, \ldots, x_n) \land f(y_1,\ldots,y_n) \land \bigwedge_{i=1}^{n-1} x_i \leftrightarrow y_i\right) \rightarrow (x_n \leftrightarrow y_n)
\]
\end{example}

\begin{theorem}\label{thm:QPL(G)}
Let $\calG$ be a set of $\ESOQPL$-definable \emph{(}$\SOQPL$-definable\emph{)}, polynomial time translatable  generalized dependence atoms. Then every sentence in $\QPL(\calG)$ \emph{(}$\QPTL(\calG)$\emph{)} can be translated to an equivalent $\ESOQPL$ \emph{(}$\SOQPL$\emph{)} sentence in polynomial time.
\end{theorem}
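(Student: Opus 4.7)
The plan is to define, by structural induction, a polynomial-time recursive translation $\phi\mapsto\phi^\circ(f)$ that maps a $\QPTL(\calG)$-formula $\phi$ with free propositions $\Fr(\phi)=\{q_1,\ldots,q_m\}$ to an $\SOQPL$-formula $\phi^\circ(f)$ whose only free variable is an $m$-ary function symbol $f$. The intended invariant is that for every $\Phi$-interpretation $S$, $\llbracket \phi^\circ(f) \rrbracket_S=1$ iff $X\models\phi$, where $X$ is the team encoded by $S(f)$: namely, $s\in X$ iff $S(f)(s(q_1),\ldots,s(q_m))=1$. A sentence $\phi$ is then translated by evaluating $\phi^\circ$ at the canonical encoding of $\{\emptyset\}$, i.e.\ the $0$-ary function with value $1$, yielding a closed $\SOQPL$-formula equivalent to $\phi$.

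The recursive cases mirror the team-semantic clauses using a block of fresh bit-position variables $\vec x=x_1,\ldots,x_m$. Literals become $p_i^\circ(f)\dfn \forall \vec x\,(f(\vec x)\imp x_i)$, dually for $\neg p_i$; conjunction distributes over $\phi^\circ$. Lax disjunction is translated as $\exists f_1\exists f_2\,\bigl(\forall\vec x\,(f(\vec x)\leftrightarrow f_1(\vec x)\lor f_2(\vec x))\land \phi^\circ(f_1)\land\psi^\circ(f_2)\bigr)$. Existential quantification $\exists p$ becomes $\exists g\,\bigl(\forall\vec x\,(f(\vec x)\leftrightarrow g(\vec x,0)\lor g(\vec x,1))\land \phi^\circ(g)\bigr)$ with $g$ of arity $m+1$, and universal $\forall p$ is similar but with $g(\vec x,b)\leftrightarrow f(\vec x)$. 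Classical negation $\negg\phi$ translates directly to $\neg\phi^\circ(f)$. For a generalized atom $A_G(p_{i_1},\ldots,p_{i_n})$ we existentially introduce a fresh $n$-ary function $h$ axiomatized as the characteristic function of the projection $\rel{X,(p_{i_1},\ldots,p_{i_n})}$ via $\forall\vec y\,\bigl(h(\vec y)\leftrightarrow \exists\vec x\,\bigl(\bigwedge_{k=1}^n (x_{i_k}\leftrightarrow y_k)\land f(\vec x)\bigr)\bigr)$, and then conjoin $\phi_G(h)$, where $\phi_G(f)$ is the defining formula supplied by the polynomial-time translator for $\calG$.

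Correctness is a routine structural induction matching each team-semantic clause with its $\SOQPL$-encoding; the only delicate case is the generalized atom, where the above axiomatization of $h$ ensures that $h$ is exactly the characteristic function of $\rel{X,(p_{i_1},\ldots,p_{i_n})}$, so $\phi_G(h)$ precisely expresses $\rel{X,(p_{i_1},\ldots,p_{i_n})}\in G$. Each rule contributes only a constant-size schema plus, in the atom case, the polynomially bounded $\phi_G$, so the translation is computable in polynomial time. The $\ESOQPL$ versus $\SOQPL$ split hinges precisely on the $\negg$-clause: in $\QPL(\calG)$ there is no $\negg$, so the $\neg\phi^\circ$ case never fires, every proper-function quantifier introduced by the translation lies under an even number of negations, and since each $\phi_G\in\ESOQPL$ the resulting sentence is in $\ESOQPL$; in the $\QPTL(\calG)$ case, $\negg$ forces the full $\SOQPL$, which is exactly compensated by the stronger $\SOQPL$-definability assumption on $\calG$. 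I expect the main subtlety to be the atom clause — one must carefully relate the functional characteristic-function encoding of the projected team with the relational semantics used in the definition of $A_G$, and verify that $\phi_G$ is inserted with the correct polarity, which is automatic because generalized atoms occur only positively in the syntax.
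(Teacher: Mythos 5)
Your proposal is correct and follows essentially the same route as the paper: a structural induction that encodes teams as Boolean characteristic functions, with clause-by-clause translations for literals, $\land$, lax $\lor$, the quantifiers, $\negg$ (as classical $\neg$), and generalized atoms via an existentially quantified characteristic function of the projected team, together with the observation that dropping the $\negg$-clause keeps all proper-function quantifiers positive and hence yields an $\ESOQPL$-sentence. The differences (biconditionals in place of the paper's paired implications, and evaluating at the encoding of $\{\emptyset\}$ rather than closing off with $\exists f\,\exists\vec p\,(f(\vec p)\land\psi(f))$) are cosmetic.
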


These translations are analogously presented by Väänänen in the first-order setting \cite{vaananen07}.
As we restrict ourselves to propositional logics, the difference is that only the domain $\{0,1\}$  and therefore the logic $\SOQPL$ are considered for the resulting formulae. The idea is to encode teams of assignments as their Boolean "characteristic functions". We start the proof with a slightly more general lemma.

\begin{lemma}\label{thm:gda-to-so-ptime}
Let $\calG$ be a set of $\SOQPL$-definable, polynomial time translatable generalized dependence atoms. Then for every formula $\phi\in\QPTL(\calG)$ and every set of proposition symbols $ \{p_1, \ldots ,p_n\}\supseteq  \Fr{\phi}$ there is an $\SOQPL$-sentence $\psi(f)$ computable in polynomial time \suchthat for all Boolean teams $X$ with  $\Dom{X} \supseteq \{p_1, \ldots, p_n\} $,
\[
X \models \phi\Leftrightarrow \emptyset^f_{\chi(X, (p_1, \dots,p_n))} \models \psi(f)\text{.}
\]
\end{lemma}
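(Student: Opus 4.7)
The plan is to proceed by structural induction on $\phi$, producing the $\SOQPL$-sentence $\psi(f)$ compositionally; throughout, a team $X$ over the proposition symbols $\{p_1,\dots,p_n\}$ is encoded by its characteristic function $f\colon\{0,1\}^n\to\{0,1\}$, so that $f(\tuple x)=1$ iff some $s\in X$ satisfies $s(\tuple p)=\tuple x$. This is the standard Väänänen-style encoding, transferred from the first-order to the propositional second-order setting, where the underlying domain is simply $\{0,1\}$.

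For the literal cases $\phi=p_i$ and $\phi=\neg p_i$, I would define $\psi(f)$ as $\forall x_1\dots\forall x_n(f(\tuple x)\rightarrow x_i)$ respectively $\forall x_1\dots\forall x_n(f(\tuple x)\rightarrow \neg x_i)$, directly reflecting the flat semantics of atomic literals. For a generalized dependence atom $A_G(p_{i_1},\dots,p_{i_k})$, let $\phi_G(g)$ be the $\SOQPL$-formula defining $G$ (which exists by assumption and is obtainable in polynomial time). The encoding $f$ of $X$ does not directly give the characteristic function of the projection $\rel{X,(p_{i_1},\dots,p_{i_k})}$, so I existentially quantify a fresh $k$-ary function $g$ and enforce
\[
\forall\tuple y\,\Big(g(\tuple y)\leftrightarrow \exists x_1\dots\exists x_n\big(f(\tuple x)\land \bigwedge_{j=1}^k (x_{i_j}\leftrightarrow y_j)\big)\Big),
\]
and then conjoin $\phi_G(g)$; this yields an $\SOQPL$-sentence of size linear in $n+k+|\phi_G|$.

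The inductive cases follow the team-semantic clauses. For $\phi_1\land\phi_2$, take the conjunction of the translations. For $\phi_1\lor\phi_2$, existentially quantify two $n$-ary functions $g,h$ encoding a split $X=Y\cup Z$, enforce $\forall\tuple x(f(\tuple x)\leftrightarrow g(\tuple x)\lor h(\tuple x))$, and conjoin the two inductively obtained sentences with $f$ replaced by $g$ and $h$ respectively. For $\negg\phi$, simply negate the translation (using that $\SOQPL$ is closed under negation). For $\exists p\,\phi$, existentially quantify an $(n{+}1)$-ary function $f'$ encoding the supplemented team and enforce that its projection onto the first $n$ coordinates equals $f$, i.e.\ $\forall\tuple x(f(\tuple x)\leftrightarrow (f'(\tuple x,0)\lor f'(\tuple x,1)))$, then recurse with $f'$ in place of $f$ over $\{p_1,\dots,p_n,p\}$. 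For $\forall p\,\phi$, it suffices to existentially quantify $f'$ with $\forall\tuple x\forall b(f'(\tuple x,b)\leftrightarrow f(\tuple x))$, as the duplicating team's characteristic function is uniquely determined by $f$.

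The correctness of each case is immediate from the corresponding semantic clause and the definition of the characteristic function. The main point to verify is polynomial-time computability: each inductive step adds a constant number of function quantifiers and a formula of size $\bigO{n}$ plus, in the GDA case, the translation $\phi_G$, which by polynomial-time translatability has polynomial size in the arity. Thus the overall translation is polynomial in $|\phi|$ and $n$. The only subtle step is the GDA clause, where one must be careful to introduce the projected characteristic function $g$ as a fresh quantified object rather than attempting to substitute it syntactically into $\phi_G$; this is the step I expect to require the most care, though it is conceptually routine.
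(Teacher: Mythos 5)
Your proposal is correct and follows essentially the same route as the paper's proof: a structural induction encoding teams as Boolean characteristic functions, with fresh existentially quantified function symbols for the split in disjunction, the supplemented/duplicated teams for quantifiers, the projection in the generalized-atom case, and plain negation for $\negg$. The only differences are cosmetic (you state the projection and union conditions as biconditionals where the paper uses pairs of implications), so nothing further is needed.
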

\begin{proof}

Analogously to the textbook translation from dependence logic to $\ESO$ \cite{vaananen07}, we show how to transform an open $\QPTL(\calG)$-formula $\phi$, whose free proposition symbols are from $ \{p_1, \ldots ,p_n\}$, to an $\SOQPL$-sentence $\phi^*(f)$. One can easily verify by induction that for any team $X$ with $\Dom{X}\supseteq\{p_1, \ldots ,p_n\} $, $X\models  \phi$ if and only if  $\emptyset^f_{\chi(X, (p_1, \dots,p_n))} \models \phi^*(f)$.
The construction proceeds recursively as follows.
\begin{enumerate}
\item Assume $\phi = A_G(p_{i_1}, \ldots ,p_{i_k})$ for $G\in \calG$, and let $\tuple p=(p_1, \ldots ,p_n)$, $\tuple p_0= (p_{i_1}, \ldots ,p_{i_k})$. Moreover, let $\tuple p_1$ be any sequence listing $\{p_1, \ldots ,p_n\}\setminus\{p_{i_1}, \ldots ,p_{i_k}\}$. Then $\phi^*(f)$ is defined as
\[
\exists g \, \psi(g)\wedge \pi_{\tuple p, \,\tuple p_0}(f,g),
\]
where $g$ has arity $k$, $\psi(g)$ is the $\SOQPL$-translation of $G$ and $\pi_{\tuple p, \,\tuple p_0}(f,g) \dfn \forall \tuple p (f(\tuple p) \rightarrow g(\tuple p_0))\wedge \forall \tuple p_0\exists \tuple p_1(g(\tuple p_0) \rightarrow f(\tuple p))$ expresses that the team encoded in $g$ is the projection of the team encoded in $f$ onto the variables $\tuple p_0$.
\item If $\phi = p_i$, then $\phi^*(f)$ is defined as
$\forall \tuple p (f(\tuple p) \rightarrow p_i).$

\item If $\phi = \neg p_i$, then
$\phi^*(f)$ is defined as
$\forall \tuple p (f(\tuple p) \rightarrow \neg p_i).$

\item If $\phi= \psi_0\wedge \psi_1$, then
$\phi^*(f)$ is defined as
$\psi^*_0(f)\wedge \psi^*_1(f).$

\item If $\phi= \psi_0\vee \psi_1$, then
$\phi^*(f)$ is defined as

\[
\exists f_0\exists f_1(\psi^*_0(f_0)\wedge\psi^*_1(f_1) \wedge \forall \tuple p\big (f(\tuple p) \rightarrow (f_0(\tuple p) \vee f_1(\tuple p)))\wedge \forall\tuple q ((f_0(\tuple q) \vee f_1(\tuple q))\rightarrow f(\tuple q) )\big ).
\]

\item If $\phi = \forall q \psi$, then $\phi^*(f)$ is defined as
\[
\exists g\big (\psi^*(g)\wedge \forall \tuple p\; \forall p'\,(f(\tuple p) \rightarrow g(\tuple p ,p')) \wedge \forall q \forall q' ( g(\tuple q, q') \rightarrow f(\tuple q))\big).
\]

 \item If $\phi = \exists q \psi$, then $\phi^*(f)$ is defined as
\[
\exists g\big (\psi^*(g)\wedge \forall \tuple p\;\exists p'  \big ( f(\tuple p) \imp g(\tuple p, p'))   \wedge \forall \tuple q \; \forall q' ( g(\tuple q, q')\imp f(\tuple q) \,)\big ).
\]
\item If $\phi=\negg \psi$, then $\phi^*(f)$ is defined as $\neg \psi^*(f)$.\qedhere
\end{enumerate}
\end{proof}

\medskip

Note that, if the atoms in $\calG$ are $\ESOQPL$-definable and if the $\negg$-case is dropped from the above translation, then the resulting formula itself is in $\ESOQPL$.

\begin{corollary}\label{thm:gda-to-eso-ptime}
Let $\calG$ be a set of  $\ESOQPL$-definable, polynomial time translatable generalized dependence atoms. Then for every formula $\phi\in\QPL(\calG)$ and every set of proposition symbols $ \{p_1, \ldots ,p_n\}\supseteq  \Fr{\phi}$ there is an $\ESOQPL$-sentence $\psi(f)$ computable in polynomial time \suchthat for all Boolean teams $X$ with  $\Dom{X} \supseteq \{p_1, \ldots, p_n\} $,
\[
X \models \phi\Leftrightarrow \emptyset^f_{\chi(X, (p_1, \dots,p_n))} \models \psi(f)\text{.}
\]
\end{corollary}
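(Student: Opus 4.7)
The plan is to leverage Lemma \ref{thm:gda-to-so-ptime} almost verbatim and argue that the translation it constructs automatically lands in $\ESOQPL$ once the input is restricted to $\QPL(\calG)$. Concretely, I would re-examine the recursive definition of $\phi^*(f)$ from that proof and check, case by case, that no universal quantification over a proper (arity $>0$) function symbol is ever introduced, provided the input avoids the team-logic connective $\negg$.

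First, observe that $\QPL(\calG)$ is precisely the $\negg$-free fragment of $\QPTL(\calG)$. Consequently, case~8 of the recursive translation from the lemma—the only clause that wraps a translated subformula $\psi^*(f)$ inside a classical negation $\neg$—never applies when $\phi \in \QPL(\calG)$. This is exactly the observation already sketched immediately above the corollary, which I will turn into a formal induction.

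Second, I would verify the inductive closure cases. The literal cases 2 and 3 introduce only propositional quantifiers. Conjunction (case 4) preserves $\ESOQPL$ by simple syntactic closure. Disjunction (case 5) and the propositional quantifier cases 6 and 7 prepend an outermost existential function quantifier $\exists f_0,\exists f_1$ or $\exists g$ to the sub-translations, followed by first-order (propositional) quantifiers only, so they preserve $\ESOQPL$. In the atom case 1, the formula $\psi(g)$ inserted to define $G$ is, by the $\ESOQPL$-definability hypothesis on $\calG$, itself an $\ESOQPL$-formula; the auxiliary projection formula $\pi_{\tuple p,\tuple p_0}(f,g)$ uses only propositional quantifiers; and the whole construct is then existentially quantified by $\exists g$, keeping the result in $\ESOQPL$. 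A routine induction on $\phi$ completes the syntactic argument.

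Finally, the semantic equivalence $X \models \phi \Leftrightarrow \emptyset^f_{\chi(X,(p_1,\dots,p_n))} \models \psi(f)$ is inherited directly from Lemma \ref{thm:gda-to-so-ptime} applied to $\phi$ (viewed as a $\QPTL(\calG)$-formula that happens not to contain $\negg$). The polynomial-time bound likewise carries over: the atom clause is the only one whose size depends on $\calG$, and the polynomial-time translatability of $\calG$ bounds the cost of producing each defining formula $\psi(g)$. The main—and essentially only—obstacle is the bookkeeping to confirm that none of the auxiliary subformulae (projections, frontier matchings, atom definitions) silently introduce a universal function quantifier; once that is checked, no genuinely new idea beyond the lemma is needed.
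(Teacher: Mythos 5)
Your proposal is correct and matches the paper's own justification, which is precisely the remark preceding the corollary: dropping the $\negg$-case (the only clause wrapping a translated subformula in classical negation) and using the $\ESOQPL$-definability of the atoms keeps the translation of Lemma \ref{thm:gda-to-so-ptime} inside $\ESOQPL$, with the semantic equivalence and polynomial-time bound inherited unchanged. Your case-by-case verification that the remaining clauses introduce only positively occurring existential function quantifiers is just a more explicit write-up of the same argument.
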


\noindent We are now ready to prove \Cref{thm:QPL(G)}.

\begin{proof}[Proof of \Cref{thm:QPL(G)}]
Let $\phi$ be a $\QPL(\calG)$ resp.\ $\QPTL(\calG)$ sentence. First translate it to an $\ESOQPL$-sentence resp.\ $\SOQPL$-sentence $\psi(f)$ in polynomial time according to \Cref{thm:gda-to-eso-ptime} resp.\ \Cref{thm:gda-to-so-ptime}.
It holds that $X \models \phi$ iff $\emptyset^f_{\chi(X,(p_1,\ldots,p_n))} \models \psi(f)$, if $\Dom{X} \supseteq \{p_1, \ldots, p_n\}$, and in particular that $\phi$ is true (\ie, satisfied by a non-empty team) iff $\psi' \dfn \exists f \exists \tuple p (f(\tuple p) \land \psi(f))$ is true.
\end{proof}

\noindent Hence, we may conclude this section with the following complexity results.

\begin{theorem}\label{thm:qptl-gda-aexppoly}

\begin{enumerate}[(i)]
\item Assume that $\calG$ is a polynomial time translatable set of $\SOQPL$-definable generalized dependence atoms. Then $\TRUE(\QPTL(\calG))$ is $\AEXPPOLY$-complete \wrt $\leqpm$-reductions.
\item Assume that $\calG$ is a polynomial time translatable set of $\ESOQPL$-definable generalized dependence atoms, and assume that dependence atoms translate into $\QPL(\calG)$ in polynomial time. Then $\TRUE(\QPL(\calG))$ is $\NEXPTIME$-complete \wrt $\leqpm$-reductions.
\end{enumerate}
\end{theorem}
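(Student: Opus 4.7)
The plan is to establish membership and hardness separately for each part, with membership obtained by composing translations to second-order propositional logic and hardness by subsuming problems already known to be complete.

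For membership, I apply \Cref{thm:QPL(G)} to translate the input $\QPL(\calG)$- resp.\ $\QPTL(\calG)$-sentence into an equivalent $\ESOQPL$- resp.\ $\SOQPL$-sentence in polynomial time. In part~(i), truth of the resulting $\SOQPL$-sentence lies in $\AEXPPOLY$ by \Cref{thm:qbsf-completeness}, yielding $\TRUE(\QPTL(\calG)) \in \AEXPPOLY$. In part~(ii), the resulting $\ESOQPL$-sentence can be further reduced in polynomial time to a $\DQBF$-instance via \Cref{thm:so-sentence-to-adqbf-ptime}, and $\TRUE(\DQBF) \in \NEXPTIME$ by \Cref{TDQBF problem}.

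For hardness in~(i), I observe that $\QPTL \subseteq \QPTL(\calG)$, so the identity map is a trivial $\leqpm$-reduction from $\TRUE(\QPTL)$ to $\TRUE(\QPTL(\calG))$. Composing this with the polynomial-time reduction from $\TRUE(\ADQBF)$ to $\TRUE(\QPTL)$ supplied by \Cref{thm:adqbf-to-ptl2}, and invoking the $\AEXPPOLY$-completeness of $\TRUE(\ADQBF)$ from \Cref{thm:odd-k-dqbf-hardness}, establishes $\AEXPPOLY$-hardness. For hardness in~(ii), I invoke the hypothesis that the dependence atom translates into $\QPL(\calG)$ in polynomial time. Substituting this translation pointwise into arbitrary $\QPD$-sentences yields a polynomial-time reduction from $\TRUE(\QPD)$ to $\TRUE(\QPL(\calG))$, and $\NEXPTIME$-hardness then follows from \Cref{TrueQPD}.

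The only point that genuinely needs attention, beyond this chaining of polynomial-time reductions, is that the sentence-level translation constructed inside the proof of \Cref{thm:QPL(G)} --- which closes off the free function symbol via $\psi' \dfn \exists f \, \exists \tuple p \, (f(\tuple p) \land \psi(f))$ --- faithfully captures truth of the original team-semantical sentence at the distinguished team $\{\emptyset\}$. This amounts to a bookkeeping verification that the characteristic function of $\{\emptyset\}$ over the empty tuple of propositions is precisely the witness extracted by $\psi'$; once that is confirmed, no further work is needed.
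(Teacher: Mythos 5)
Your proof is correct and follows essentially the same route as the paper: upper bounds via the polynomial-time translation of \Cref{thm:QPL(G)} into $\SOQPL$ resp.\ $\ESOQPL$ followed by \Cref{thm:qbsf-completeness}, and lower bounds via \Cref{thm:adqbf-to-ptl2} for part~(i) and via the assumed expressibility of dependence atoms together with \Cref{TrueQPD} for part~(ii). The only (inessential) deviation is that for the $\NEXPTIME$ upper bound in part~(ii) you detour through $\DQBF$ via \Cref{thm:so-sentence-to-adqbf-ptime} and \Cref{TDQBF problem} instead of citing the $\SigmaE{1}$-membership from \Cref{thm:qbsf-completeness} directly; both are fine.
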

\begin{proof}
Both upper bounds follow from \Cref{thm:QPL(G)} and \Cref{thm:qbsf-completeness}.  The lower bound for $\TRUE(\QPTL(\calG))$ ($\TRUE(\QPL(\calG))$) follows from \Cref{thm:adqbf-to-ptl2} (\Cref{TrueQPD}).
\end{proof}

\section{Summary}

In this article we compared different approaches to function quantification, with the logics depicted in \Cref{fig:summary}. We showed that, while some of the logics can express the quantification of functions only in a restricted way, like only in form of Skolem functions, they all can be efficiently translated into each other. It was shown in \Cref{thm:simple-prenex-so} that the "uniqueness" property of function symbols occurring in $\SOQPL$ and $\ESOQPL$ formulae can be obtained and hence (as depicted in the proof of \Cref{thm:odd-k-dqbf-hardness}) these formulae have a natural translation into $\ADQBF$ and $\DQBF$. \Cref{thm:adqbf-to-ptl} established that $\ADQBF$ and $\DQBF$ can easily be translated into team semantics, \ie, into $\QPTL(\depop{})$ and $\QPD$. The point is that the dependence atom can be used in team semantics to model the constraints of Skolem functions. Finally we showed in \Cref{thm:qptl-gda-aexppoly} 
that propositional team logic, even when augmented with generalized dependence atoms, can efficiently be translated back into $\SOQPL$ resp.\ $\ESOQPL$ when teams are modeled as Boolean functions.
Thus all these formalisms capture the same complexity classes: the class $\AEXPPOLY$ by unbounded quantifier alternation and the class $\NEXPTIME$ by the existential fragment.
Since $\QPTL(\calG)$ can express the dependence atom, it is complete for $\AEXPPOLY$ for any set $\calG$ of polynomial time translatable $\SOQPL$-definable generalized dependence atoms. For $\QPL(\calG)$ the matter is more complicated: If the dependence atom can be efficiently expressed in  $\QPL(\calG)$ and $\calG$ is a set of polynomial time translatable $\ESOQPL$-definable generalized dependence atoms, then $\QPL(\calG)$ is $\NEXPTIME$-complete, but for instance for $\QPLInc$ the complexity drops down to $\EXPTIME$, as shown in \Cref{thm:qplinc}. Higher levels of the exponential hierarchy are not only captured by fragments of $\SOQPL$ (see \Cref{thm:qbsf-completeness}), but also (with more or less sharp bounds) by the corresponding fragments of $\ADQBF$ (see \Cref{thm:odd-k-dqbf-hardness}) and $\mathsf{(Q)PTL}$ (\Cref{thm:ptl-hardness-levels}).

\medskip

\begin{figure}[ht]
\centering
    \begin{tabular}{llll}
    \toprule
	Logic&Ex. fragment&Method of function quantification&Example\\
	\midrule
    $\SOQPL$&$\ESOQPL$&Explicit, Second-order interpretation&$\exists f_y\, \forall \,\tuple x \, \phi$\\
    $\logicClFont{ADQBF}$&$\logicClFont{DQBF}$&Constraints, Skolem functions&$\forall \,\tuple x \, \exists y \, \phi$, $C_y = \{x_1\}$\\
    $\QPTL$&$\QPD$&Dep. atoms, Supplemented teams&$\forall \, \tuple x\, \exists y\; \phi \land \dep{x_1,y}$\\
    $\PTL$&$\PD$&Dep. atoms, Splitting of teams&$\dep{x_1,y} \lor (\phi \land \dep{x_1,y})$\\
    \bottomrule
\end{tabular}
\caption{Different formalisms of Boolean function quantification\label{fig:summary}}\end{figure}

\section*{Acknowledgements}
We wish to thank the anonymous referees for their helpful suggestions.
Miika Hannula was supported by the FRDF grant of the University of Auckland (project 3706751). Juha Kontinen and Jonni Virtema were supported by grant 292767 of the Academy of Finland.

\bibliographystyle{eptcs}
\bibliography{team_pl}

\begin{thebibliography}{10}
\providecommand{\bibitemdeclare}[2]{}
\providecommand{\surnamestart}{}
\providecommand{\surnameend}{}
\providecommand{\urlprefix}{Available at }
\providecommand{\url}[1]{\texttt{#1}}
\providecommand{\href}[2]{\texttt{#2}}
\providecommand{\urlalt}[2]{\href{#1}{#2}}
\providecommand{\doi}[1]{doi:\urlalt{http://dx.doi.org/#1}{#1}}
\providecommand{\bibinfo}[2]{#2}

\bibitemdeclare{article}{alternation}
\bibitem{alternation}
\bibinfo{author}{Ashok~K. \surnamestart Chandra\surnameend},
  \bibinfo{author}{Dexter \surnamestart Kozen\surnameend} \&
  \bibinfo{author}{Larry~J. \surnamestart Stockmeyer\surnameend}
  (\bibinfo{year}{1981}): \emph{\bibinfo{title}{Alternation}}.
\newblock {\sl \bibinfo{journal}{J. {ACM}}}
  \bibinfo{volume}{28}(\bibinfo{number}{1}), pp. \bibinfo{pages}{114--133},
  \doi{10.1145/322234.322243}.

\bibitemdeclare{book}{dukovo16}
\bibitem{dukovo16}
\bibinfo{author}{Arnaud \surnamestart Durand\surnameend}, \bibinfo{author}{Juha
  \surnamestart Kontinen\surnameend} \& \bibinfo{author}{Heribert \surnamestart
  Vollmer\surnameend} (\bibinfo{year}{2016}):
  \emph{\bibinfo{title}{Expressivity and Complexity of Dependence Logic}}.
\newblock \bibinfo{publisher}{Springer, In Press},
  \doi{10.1007/978-3-319-31803-5\_2}.

\bibitemdeclare{article}{Galliani12}
\bibitem{Galliani12}
\bibinfo{author}{Pietro \surnamestart Galliani\surnameend}
  (\bibinfo{year}{2012}): \emph{\bibinfo{title}{Inclusion and exclusion
  dependencies in team semantics - On some logics of imperfect information}}.
\newblock {\sl \bibinfo{journal}{Ann. Pure Appl. Logic}}
  \bibinfo{volume}{163}(\bibinfo{number}{1}), pp. \bibinfo{pages}{68--84},
  \doi{10.1016/j.apal.2011.08.005}.

\bibitemdeclare{inproceedings}{GallianiHK13}
\bibitem{GallianiHK13}
\bibinfo{author}{Pietro \surnamestart Galliani\surnameend},
  \bibinfo{author}{Miika \surnamestart Hannula\surnameend} \&
  \bibinfo{author}{Juha \surnamestart Kontinen\surnameend}
  (\bibinfo{year}{2013}): \emph{\bibinfo{title}{{Hierarchies in independence
  logic}}}.
\newblock In: {\sl \bibinfo{booktitle}{Computer Science Logic 2013 (CSL
  2013)}}, {\sl \bibinfo{series}{Leibniz International Proceedings in
  Informatics (LIPIcs)}}~\bibinfo{volume}{23}, \bibinfo{publisher}{Schloss
  Dagstuhl--Leibniz-Zentrum fuer Informatik}, \bibinfo{address}{Dagstuhl,
  Germany}, pp. \bibinfo{pages}{263--280}, \doi{10.4230/LIPIcs.CSL.2013.263}.

\bibitemdeclare{incollection}{piil_complexity_2015}
\bibitem{piil_complexity_2015}
\bibinfo{author}{Miika \surnamestart Hannula\surnameend}, \bibinfo{author}{Juha
  \surnamestart Kontinen\surnameend}, \bibinfo{author}{Jonni \surnamestart
  Virtema\surnameend} \& \bibinfo{author}{Heribert \surnamestart
  Vollmer\surnameend} (\bibinfo{year}{2015}): \emph{\bibinfo{title}{Complexity
  of Propositional Independence and Inclusion Logic}}.
\newblock In: {\sl \bibinfo{booktitle}{Mathematical Foundations of Computer
  Science 2015 - 40th International Symposium, {MFCS} 2015, Milan, Italy,
  August 24-28, 2015, Proceedings, Part {I}}}, pp. \bibinfo{pages}{269--280},
  \doi{10.1007/978-3-662-48057-1\_21}.

\bibitemdeclare{inproceedings}{hkmv15}
\bibitem{hkmv15}
\bibinfo{author}{Lauri \surnamestart Hella\surnameend}, \bibinfo{author}{Antti
  \surnamestart Kuusisto\surnameend}, \bibinfo{author}{Arne \surnamestart
  Meier\surnameend} \& \bibinfo{author}{Heribert \surnamestart
  Vollmer\surnameend} (\bibinfo{year}{2015}): \emph{\bibinfo{title}{Modal
  Inclusion Logic: Being Lax is Simpler than Being Strict}}.
\newblock In: {\sl \bibinfo{booktitle}{Mathematical Foundations of Computer
  Science 2015 - 40th International Symposium, {MFCS} 2015, Milan, Italy,
  August 24-28, 2015, Proceedings, Part {I}}}, pp. \bibinfo{pages}{281--292},
  \doi{10.1007/978-3-662-48057-1\_22}.

\bibitemdeclare{incollection}{hintikka89}
\bibitem{hintikka89}
\bibinfo{author}{Jaakko \surnamestart Hintikka\surnameend} \&
  \bibinfo{author}{Gabriel \surnamestart Sandu\surnameend}
  (\bibinfo{year}{1989}): \emph{\bibinfo{title}{Informational Independence as a
  Semantical Phenomenon}}.
\newblock In \bibinfo{editor}{Ivan T.~Frolov \surnamestart Jens
  Erik~Fenstad\surnameend} \& \bibinfo{editor}{Risto \surnamestart
  Hilpinen\surnameend}, editors: {\sl \bibinfo{booktitle}{Logic, Methodology
  and Philosophy of Science VIIIProceedings of the Eighth International
  Congress of Logic, Methodology and Philosophy of Science}}, {\sl
  \bibinfo{series}{Studies in Logic and the Foundations of Mathematics}}
  \bibinfo{volume}{126}, \bibinfo{publisher}{Elsevier}, pp. \bibinfo{pages}{571
  -- 589}, \doi{10.1016/S0049-237X(08)70066-1}.

\bibitemdeclare{article}{kontinen2011team}
\bibitem{kontinen2011team}
\bibinfo{author}{Juha \surnamestart Kontinen\surnameend} \&
  \bibinfo{author}{Ville \surnamestart Nurmi\surnameend}
  (\bibinfo{year}{2011}): \emph{\bibinfo{title}{Team Logic and Second-Order
  Logic}}.
\newblock {\sl \bibinfo{journal}{Fundam. Inform.}}
  \bibinfo{volume}{106}(\bibinfo{number}{2-4}), pp. \bibinfo{pages}{259--272},
  \doi{10.3233/FI-2011-386}.

\bibitemdeclare{article}{Kuusisto2015}
\bibitem{Kuusisto2015}
\bibinfo{author}{Antti \surnamestart Kuusisto\surnameend}
  (\bibinfo{year}{2015}): \emph{\bibinfo{title}{A Double Team Semantics for
  Generalized Quantifiers}}.
\newblock {\sl \bibinfo{journal}{Journal of Logic, Language and Information}}
  \bibinfo{volume}{24}(\bibinfo{number}{2}), pp. \bibinfo{pages}{149--191},
  \doi{10.1007/s10849-015-9217-4}.

\bibitemdeclare{article}{lohrey_model-checking_2012}
\bibitem{lohrey_model-checking_2012}
\bibinfo{author}{Markus \surnamestart Lohrey\surnameend}
  (\bibinfo{year}{2012}): \emph{\bibinfo{title}{Model-checking hierarchical
  structures}}.
\newblock {\sl \bibinfo{journal}{Journal of Computer and System Sciences}}
  \bibinfo{volume}{78}(\bibinfo{number}{2}), pp. \bibinfo{pages}{461--490},
  \doi{10.1016/j.jcss.2011.05.006}.

\bibitemdeclare{article}{sqbf_report}
\bibitem{sqbf_report}
\bibinfo{author}{Martin \surnamestart L{\"{u}}ck\surnameend}
  (\bibinfo{year}{2016}): \emph{\bibinfo{title}{Complete Problems of
  Propositional Logic for the Exponential Hierarchy}}.
\newblock {\sl \bibinfo{journal}{CoRR}} \bibinfo{volume}{abs/1602.03050}.

\bibitemdeclare{phdthesis}{nurmi09}
\bibitem{nurmi09}
\bibinfo{author}{Ville \surnamestart Nurmi\surnameend} (\bibinfo{year}{2009}):
  \emph{\bibinfo{title}{Dependence Logic: Investigations into Higher-Order
  Semantics Defined on Teams}}.
\newblock Ph.D. thesis, \bibinfo{school}{University of Helsinki}.

\bibitemdeclare{article}{Peterson2001}
\bibitem{Peterson2001}
\bibinfo{author}{Gary~L. \surnamestart Peterson\surnameend},
  \bibinfo{author}{John~H. \surnamestart Reif\surnameend} \&
  \bibinfo{author}{Salman \surnamestart Azhar\surnameend}
  (\bibinfo{year}{2001}): \emph{\bibinfo{title}{Lower bounds for multiplayer
  noncooperative games of incomplete information}}.
\newblock {\sl \bibinfo{journal}{Computers \& Mathematics with Applications}}
  \bibinfo{volume}{41}(\bibinfo{number}{7-8}), pp. \bibinfo{pages}{957--992},
  \doi{10.1016/S0898-1221(00)00333-3}.

\bibitemdeclare{inproceedings}{Stockmeyer:1973}
\bibitem{Stockmeyer:1973}
\bibinfo{author}{Larry~J. \surnamestart Stockmeyer\surnameend} \&
  \bibinfo{author}{Albert~R. \surnamestart Meyer\surnameend}
  (\bibinfo{year}{1973}): \emph{\bibinfo{title}{Word Problems Requiring
  Exponential Time: Preliminary Report}}.
\newblock In: {\sl \bibinfo{booktitle}{Proceedings of the 5th Annual {ACM}
  Symposium on Theory of Computing, April 30 - May 2, 1973, Austin, Texas,
  {USA}}}, pp. \bibinfo{pages}{1--9}, \doi{10.1145/800125.804029}.

\bibitemdeclare{book}{vaananen07}
\bibitem{vaananen07}
\bibinfo{author}{Jouko~A. \surnamestart V{\"{a}}{\"{a}}n{\"{a}}nen\surnameend}
  (\bibinfo{year}{2007}): \emph{\bibinfo{title}{Dependence Logic - {A} New
  Approach to Independence Friendly Logic}}.
\newblock {\sl \bibinfo{series}{London Mathematical Society student
  texts}}~\bibinfo{volume}{70}, \bibinfo{publisher}{Cambridge University
  Press}, \doi{10.1017/CBO9780511611193}.

\bibitemdeclare{inproceedings}{virtema14}
\bibitem{virtema14}
\bibinfo{author}{Jonni \surnamestart Virtema\surnameend}
  (\bibinfo{year}{2014}): \emph{\bibinfo{title}{Complexity of validity for
  propositional dependence logics}}.
\newblock In: {\sl \bibinfo{booktitle}{Proceedings Fifth International
  Symposium on Games, Automata, Logics and Formal Verification, GandALF 2014,
  Verona, Italy, September 10-12, 2014.}}, pp. \bibinfo{pages}{18--31},
  \doi{10.4204/EPTCS.161.5}.

\bibitemdeclare{article}{vaananen_modal_2008}
\bibitem{vaananen_modal_2008}
\bibinfo{author}{Jouko \surnamestart Väänänen\surnameend}
  (\bibinfo{year}{2008}): \emph{\bibinfo{title}{Modal dependence logic}}.
\newblock {\sl \bibinfo{journal}{New Perspectives on Games and Interaction}}
  \bibinfo{volume}{4}, p. \bibinfo{pages}{237}.

\end{thebibliography}

\end{document}